\documentclass[11pt]{article}
\usepackage[margin=1in]{geometry}

\pdfoutput=1

\usepackage{balance}
\usepackage[utf8]{inputenc} 
\usepackage[T1]{fontenc}    
\usepackage{lmodern}
\usepackage{url}            
\usepackage{booktabs}       
\usepackage{amsfonts}       
\usepackage{nicefrac}       
\usepackage{microtype}      
\usepackage{amsmath}
\usepackage{inputs/color-edits}
\usepackage{subcaption}

\usepackage{bm}
\usepackage{booktabs}
\usepackage{amssymb}
\usepackage{algorithm}
\usepackage[noend]{algpseudocode}
\usepackage{algorithmicx}
\usepackage{mathtools}
\usepackage{comment} 
\usepackage[most]{tcolorbox}
\usepackage{thm-restate}
\usepackage{xfrac}
\usepackage{multirow}
\usepackage[labelfont=bf, font=small,aboveskip=0pt,belowskip=0pt]{caption}
\usepackage{bm}
\usepackage{newfloat}
\usepackage{enumitem}
\usepackage{xpatch}
\usepackage{flushend}
\usepackage{amsthm}

\usepackage[labelfont=bf, font=small,aboveskip=0pt,belowskip=0pt]{caption}

\allowdisplaybreaks

\definecolor[named]{ACMBlue}{cmyk}{1,0.1,0,0.1}
\definecolor[named]{ACMYellow}{cmyk}{0,0.16,1,0}
\definecolor[named]{ACMOrange}{cmyk}{0,0.42,1,0.01}
\definecolor[named]{ACMRed}{cmyk}{0,0.90,0.86,0}
\definecolor[named]{ACMLightBlue}{cmyk}{0.49,0.01,0,0}
\definecolor[named]{ACMGreen}{cmyk}{0.20,0,1,0.19}
\definecolor[named]{ACMPurple}{cmyk}{0.55,1,0,0.15}
\definecolor[named]{ACMDarkBlue}{cmyk}{1,0.58,0,0.21}
\usepackage[bookmarksnumbered,unicode]{hyperref}
\hypersetup{colorlinks,
  linkcolor=ACMRed,
  citecolor=ACMPurple,
  urlcolor=ACMDarkBlue,
  filecolor=ACMDarkBlue}

\usepackage{zi4}
\usepackage[frozencache,cachedir=.]{minted}

\newtheorem{lemma}{Lemma}[section]

\newtheorem{corollary}[lemma]{Corollary}

\newtheorem{definition}[lemma]{Definition}

\newtheorem{invariant}{Invariant}

\addauthor{sb}{blue} 
\addauthor{ld}{red} 
\addauthor{kl}{purple} 
\addauthor{he}{gray} 
\addauthor{ws}{orange} 
\addauthor{vm}{green} 

\DeclareMathOperator*{\argmin}{arg\,min}
\DeclareMathOperator*{\argmax}{arg\,max}

\definecolor{mygreen}{RGB}{20,140,80}
\definecolor{linkcolor}{RGB}{0,0,230}
\definecolor{mylightgray}{RGB}{230,230,230}
\definecolor{verylightgray}{RGB}{245,245,245}

\algnewcommand{\IIf}[1]{\State\algorithmicif\ #1\ \algorithmicthen}
\algnewcommand{\EndIIf}{\unskip\ \algorithmicend\ \algorithmicif}

\newcounter{myalgctr}

\newtcolorbox{OuterBox}[1][]{%
    breakable,
    enhanced,
    frame hidden,
    interior hidden,
    left=-5pt,
    right=-5pt,
    top=-5pt,
    float=p,
    boxsep=0pt,
    arc=0pt
#1}%

\newtcolorbox{InnerBox}[1][]{%
    enforce breakable,
    enhanced,
    colback=gray,
    colframe=white,
#1}%

\newenvironment{tbox}{
\vspace{0.2cm}
\begin{tcolorbox}[
                  enhanced,
                  boxsep=2pt,
                  left=1pt,
                  right=1pt,
                  top=4pt,
                  boxrule=1pt,
                  arc=0pt,
                  colback=white,
                  colframe=black,
	              breakable,
	              floatplacement=t,
	              float
                  ]
}{
\end{tcolorbox}
}

\newcommand{\tboxhrule}[0]{\vspace{0.1cm} {\color{black} \hrule} \vspace{0.2cm}}

\newenvironment{titledtbox}[1]{\begin{tbox}#1 \tboxhrule}{\end{tbox}}

\newcommand{\myparagraph}[1]{\smallskip\noindent {\bf #1.}}

\newcommand{\id}[1]{\ifmmode\mathit{#1}\else\textit{#1}\fi}
\newcommand{\const}[1]{\ifmmode\mbox{\textc{#1}}\else\textsc{#1}\fi}
\newcommand{\degree}[1]{\ensuremath{d(#1)}}

\newcommand{\dynorient}{dynamic graph-orientation}
\newcommand{\trianglelink}{triangle-based linkage}
\newcommand{\bestedge}[1]{\ensuremath{\textsc{BestEdge}(#1)}}
\newcommand{\bestedgenoarg}{\ensuremath{\textsc{BestEdge}}}

\newcommand{\unionnoarg}[1]{\ensuremath{\textsc{Union}}}
\newcommand{\uniontext}[1]{union}
\newcommand{\merge}[1]{\ensuremath{\textsc{Merge}(#1)}}
\newcommand{\mergenoarg}{\ensuremath{\textsc{Merge}}}

\newcommand{\heap}[1]{\ensuremath{\textsc{Heap}(#1)}}
\newcommand{\nghheap}[1]{neighbor-heap}
\newcommand{\mergecost}[1]{\ensuremath{\mathsf{Cost}(#1)}}
\newcommand{\totaledges}[1]{\ensuremath{\mathcal{T}(#1)}}
\newcommand{\staleness}[1]{\ensuremath{\mathcal{S}(#1)}}

\newcommand{\meldheap}[1]{\ensuremath{\mathcal{H}(#1)}}
\newcommand{\meldtab}[1]{\ensuremath{\mathcal{Q}(#1)}}

\newcommand{\graphheap}{GraphHAC-Heap}
\newcommand{\heapbased}{heap-based}
\newcommand{\graphchain}{GraphHAC-Chain}
\newcommand{\chainbased}{chain-based}

\newcommand{\singlelink}{single-linkage}
\newcommand{\completelink}{complete-linkage}

\newcommand{\weightedavglink}{weighted average-linkage}
\newcommand{\unweightedavglink}{unweighted average-linkage}

\newcommand{\hacweight}[2]{\ensuremath{\mathcal{W}(#1, #2)}}

\newcommand{\storedwgh}[1]{\ensuremath{W_{\mathcal{S}}(#1)}}
\newcommand{\truewgh}[1]{\ensuremath{W_{\mathcal{T}}(#1)}}
\newcommand{\cut}[1]{\ensuremath{\mathsf{Cut}(#1)}}




\definecolor{revise-color}{rgb}{0.76, 0.23, 0.13}
\definecolor{mygreen}{rgb}{0.0, 0.5, 0.0}
\definecolor{amaranth}{rgb}{0.9, 0.17, 0.31}

\begin{document}\sloppy

\title{Hierarchical Agglomerative Graph Clustering\\ in Nearly-Linear Time\thanks{
This is the full version of a paper that will appear at ICML'21.
The authors can be contacted at laxmandhulipala@gmail.com, \{eisen, jlacki, mirrokni\}@google.com, and jeshi@mit.edu}
}
\author{
  Laxman Dhulipala\\MIT CSAIL
  \and
  David Eisenstat\\Google Research
  \and
  Jakub Łącki\\Google Research
  \and
  Vahab Mirrokni\\Google Research
  \and
  Jessica Shi\\MIT CSAIL
}


\maketitle
\begin{abstract}

We study the widely used hierarchical agglomerative clustering (HAC)
algorithm on edge-weighted graphs.
We define an algorithmic framework for hierarchical agglomerative graph clustering
that provides the first efficient $\tilde{O}(m)$ time
exact algorithms for classic linkage measures, such as complete- and WPGMA-linkage, as well as other measures.
Furthermore, for average-linkage, arguably the most popular variant of HAC, we provide an algorithm that runs in $\tilde{O}(n\sqrt{m})$ time.
For this variant, this is the first exact algorithm that runs in subquadratic time, as long as $m=n^{2-\epsilon}$ for some constant $\epsilon > 0$.
We complement this result with a simple $\epsilon$-close approximation
algorithm for average-linkage in our framework that runs in $\tilde{O}(m)$ time.
As an application of our algorithms, we consider clustering
points in a metric space by first using $k$-NN to generate a graph from the point
set, and then running our algorithms on the resulting weighted graph.
We validate the performance of our algorithms on publicly available
datasets, and show that our approach can speed up clustering of point
datasets by a factor of 20.7--76.5x.

\end{abstract}

\section{Introduction}

Clustering is a fundamental and widely used unsupervised learning
technique with numerous applications in data mining, machine learning,
and social network analysis.
Hierarchical clustering is a popular approach to clustering which
outputs a hierarchy of clusters, where the input data objects are
singleton clusters at the bottom of the tree, with interior vertices
corresponding to merging the two children clusters.
In this paper, we consider the family of \emph{hierarchical agglomerative
clustering (HAC)} algorithms, which have attracted significant
theoretical and practical attention since they were first proposed nearly
50 years ago~\cite{king67stepwise, lance67general, sneath73numerical}.

A HAC algorithm takes as input a set of $n$ points and proceeds in $n-1$ steps.
In each step, it finds two most similar points and merges them together.
Here, the notion of similarity is defined by a configurable \emph{linkage measure}.

The specific choice of the linkage measure affects both the clustering quality and the computational complexity of the HAC algorithm.
In the case of a general linkage measure, HAC can be implemented in $O(n^3)$ time, assuming that we are given the similarity between each two points as input.
For the most commonly used linkage measure (single, average, Ward's and complete linkage) this complexity can be improved to $O(n^2)$ by using the nearest-neighbor chain algorithm~\cite{nn-chain}.

The $O(n^2)$ time algorithms for HAC are often referred to as optimal, given that they take the entire $n \times n$ similarity matrix as the input.
However, from a practical point of view, this quadratic lower bound is very pessimistic, as in many applications only a small fraction of the $n \times n$ similarities are non-negligible.
As an example, consider the problem of clustering search engine queries studied in~\cite{DBLP:conf/kdd/BeefermanB00}.
Each query is assigned a set of relevant URLs and the similarity between two queries is based on the overlap between their URL sets.
Clearly, for the vast majority of query pairs, the similarity is zero.

Knowing that in practice only $o(n^2)$ pairs of input points have non-negligible similarity scores results in two natural questions.
First, is it possible to design a subquadratic HAC algorithm in this case?
And if so, does this algorithm lead to improved running times in practical applications?
In this paper, we answer both these questions affirmatively.

\subsection{Our Contributions}
In this paper we study the HAC algorithm on edge-weighted graphs.
Formally, we consider a graph $G(V, E)$ with $n$ vertices and $m$ edges, where
each vertex represents one input point and edge weights describe the similarities between the endpoints.

We develop a general framework that encompasses both average-linkage, and complete- and WPGMA-linkage, where common primitives used in HAC are modularized into a neighbor-heap data structure, which offers tradeoffs in the theoretical guarantees depending on the representation used. In particular, different applications of this framework result in the first subquadratic (and in many cases near-linear time) algorithms for several variants of HAC.

\myparagraph{Complete-linkage and WPGMA-linkage}
As a direct application of our framework, we obtain $\tilde{O}(m)$-time exact algorithms for the complete-linkage and WPGMA-linkage measures.
For these measures we assume that the similarity between pairs of vertices not connected by an edge is ``undefined''. If we denote the undefined similarity by $\bot$, for each $x \in \mathbb{R}$ we have $\max(x, \bot) = x$ and $\mathcal{L}(x, \bot) = x$ for any linkage measure $\mathcal{L}$.

\myparagraph{Average-linkage}
As our main algorithmic result, we give an $\tilde{O}(n\sqrt{m})$ time-exact algorithm for HAC with the average-linkage (UPGMA) measure, as well as an $\tilde{O}(m)$ time approximate algorithm\footnote{Formally, for a constant $\epsilon$ we obtain an $\epsilon$-close algorithm according to the definition of~\cite{48657}.}.
Both algorithms assume that any two vertices not connected by an edge have a similarity equal to $0$.

Average-linkage is arguably the most commonly used variant of HAC, due to its very good empirical performance~\cite{doi:10.1021/ct700119m, moseley-wang, hac-reward}.
However, to the best of our knowledge, no subquadratic time algorithm for average-linkage HAC has been described prior to our work, even when $m = \Theta(n)$.

Our exact average-linkage HAC algorithm dynamically maintains a low-outdegree orientation of the graph, i.e., it assigns a direction to each edge, attempting to minimize the outdegrees of all vertices.
At each step, the maximum outdegree is $O(\alpha)$, where $\alpha$ is the \emph{arboricity} (see Section~\ref{sec:prelims} for the definition) of the graph, which allows us to bound the amortized number of updates per cluster merge by $O(\alpha + \mathrm{poly} \log n)$.

The running time of the exact algorithm is derived from a more general bound.
Consider a sequence of graphs $G_1 = G, G_2, \ldots, G_{n-1}$ computed by the HAC algorithm.
That is, $G_{i+1}$ is obtained by contracting the largest weight edge in $G_i$ and updating the edge weights using the chosen linkage measure.
We show that the exact algorithm runs in $\tilde{O}(n \cdot \bar{\alpha})$ time, where $\bar{\alpha}$ is the maximum arboricity of the graphs $G_1, \ldots, G_{n-1}$.
The $\tilde{O}(n \sqrt{m})$ bound comes from the fact that arboricity of any $m$-edge graph is upper-bounded by $O(\sqrt{m})$.
However, in many cases better bounds on graph arboricity are known.
In particular, on planar graphs (and more generally graphs with an excluded minor) our algorithm runs in $\tilde{O}(n)$ time.
Similarly, for graphs with treewidth at most $t$~\cite{robertson1986graph}, the overall running time is $\tilde{O}(nt)$.

\paragraph{Empirical evaluation.}
We provide efficient implementations of all the near-linear time algorithms that we give and study their empirical performance on large data sets.

Specifically we show an average speedup of 6.9x over a simple
exact average-linkage algorithm on large real-world graphs.
We have made our implementations publicly available on Github.
\footnote{\url{https://github.com/ParAlg/gbbs/tree/master/benchmarks/Clustering/SeqHAC}}

Finally, we show that our efficient implementations can be used to obtain a significantly faster HAC algorithm, even if the input is a collection of points.
Namely, we leverage an existing approximate nearest neighbor computation library~\cite{scann} to compute a k-nearest neighbor graph on the input data points.
We then cluster this graph using our efficient graph-based HAC implementation.
As we show, the end-to-end time of finding nearest neighbors followed by running our HAC implementation is 20.7-76.5x faster than an efficient $O(n^2)$ implementation on point sets.
At the same time, somewhat surprisingly, the quality of the clustering obtained by using our approximate method is on average the same as what is computed by the exact HAC, where for average-linkage, we achieve on average a 1.13\% increase on the Adjusted Rand-Index score and a 1.06\% increase on the Normalized Mutual Information score.

\subsection{Related Work}

The theoretical foundations of HAC algorithms have been well-studied \cite{Dasgupta2016, moseley-wang}, and have provided motivation for the use of certain variants of HAC in real-world settings \cite{RoyPokutta2016, CharikarChatziafratis2017, CoKaMa2017, ChChNi2019, hac-reward}.
Moreover, the version of HAC that takes a
graph as input has been studied before, especially in the context of graphs derived from point sets~\cite{GuRaSh1999, KaHaKu1999, knn-hac}, although without strong theoretical guarantees. 

A rich body of work has focused on breaking the quadratic time barrier for HAC.
In a recent major advancement, Abboud et al.~\cite{abboud19hac} showed that if
the input points are in Euclidean space and Ward's linkage method is
used~\cite{Ward1963}, only $\tilde{O}(n)$ similarity computations are needed.
By using an efficient nearest-neighbor data structure, they obtained a
subquadratic approximate HAC algorithm for Ward's linkage measure.

Another special case is the single-linkage measure, again in the setting of Euclidean space and using approximate distances.
By computing an approximate minimum spanning tree, one can obtain an approximate HAC solution that runs in $O(n \log n)$ total time~\cite{smid2018well}.
A related method is affinity clustering, which provides a
parallel HAC algorithm inspired by Bor\r{u}vka's minimum
spanning tree algorithm~\cite{bateni2017affinity}.

A number of papers proposed subquadratic HAC algorithms by sacrificing theoretical guarantees on the quality of the result, see~\cite{cochez2015twister} and the references therein.
A prominent line of work leverages locality sensitive hashing to obtain improved running bounds.
However, the improvements in the running time often come at a significant cost.
For example, some of the algorithms~\cite{cochez2015twister} do not come with a closed form expression giving the approximation ratio, or produce an incomplete dendrogram (by merging some datapoints together)~\cite{DBLP:conf/cikm/GilpinQD13}.

Additionally, many different HAC implementations are available, both open source and proprietary~\cite{wiki-hac}.
However, the fastest available implementations require $\Theta(n^2)$ time in the worst case (e.g.~\cite{fastcluster, scipy, sas-cluster}).
Such implementations are typically capable of clustering up to tens of thousands of points in few minutes, and in these cases they are heavily optimized, for example by taking advantage of GPUs~\cite{gpu-hac}. In comparison, by combining our efficient implementation with an efficient similarity search, we are able to cluster a dataset of almost $3\cdot 10^5$ points in under two minutes.

Finally, we note that although subquadratic approximate algorithms for HAC are known in a few settings, to the best of our knowledge, the implementations of these algorithms are not publicly available.

\subsection{Preliminaries}\label{sec:prelims}

We denote a weighted graph by $G = (V, E, w)$, where $w : E \rightarrow \mathbb{R}$ assigns a weight to each edge.
The number of vertices in a graph is $n = |V|$, and the number of
edges is $m = |E|$.
When reporting asymptotic bounds, we assume that $m = \Omega(n)$.
Vertices are assumed to be indexed from $0$ to $n-1$.
Unless otherwise mentioned, all graphs considered in this paper are
undirected, and we assume that there are no self-edges or duplicate edges.
We use $N(v)$ to denote the neighbors of vertex $v$ and 
$\degree{v} = |N(v)|$ to denote its degree.
We use $\cut{X,Y}$ to denote the set of edges between two sets
of vertices $X$ and $Y$.

The \emph{arboricity} ($\alpha$) of a graph is the minimum number of
spanning forests needed to cover the graph.
Note that $\alpha$ is upper bounded by $O(\sqrt{m})$ and lower
bounded by $\Omega(m/n)$~\cite{chiba1985arboricity}.
A \emph{$c$-orientation} of a graph $G(V, E)$ directs each edge $e \in
E$ such that the maximum out-degree of
each vertex is at most $c$.
It is well known that every arboricity $\alpha$ graph admits an
$\alpha$-orientation.
The \emph{\dynorient{}} problem is to maintain an orientation of a graph 
as it is modified over a series of edge insertions and deletions.
The two quantities of interest are
the \emph{out-degree} of the maintained orientation and the
\emph{update time}, or the time taken to process each edge update.
We say that an algorithm maintaining a $c$-orientation in $\Delta$
time per update is a \emph{$(c, \Delta)$-\dynorient{}} algorithm.

\section{Graph-Based HAC}\label{sec:graph-hac}
Given a graph $G(V,E,w)$, the \emph{hierarchical agglomerative clustering (HAC)} problem for a
given linkage measure $\mathcal{L}$ is to compute a dendrogram by
repeatedly merging the two most similar clusters (the two clusters
connected by the largest-weight edge) until only a single cluster remains.
For simplicity, we assume that the graph contains a single connected
component, although our algorithms and implementations handle graphs
with multiple components.
We treat clusters and vertices interchangeably in this paper. For example,
we often refer to the degree $\degree{C}$ or neighbors $N(C)$ of a cluster $C$.
We use $\hacweight{C}{D}$ to denote the weight of an edge between two
clusters $C$ and $D$.
The size of a cluster $|C|$ is defined to be the number of initial 
(singleton) clusters that it contains.

Initially, there are $n$ singleton clusters containing the vertices
$v_0, \ldots, v_{n-1}$.
Clusters containing multiple vertices are generated over the course of
the algorithm by merging existing clusters.
The \emph{merge} of two clusters $X,Y$ results in a new cluster $Z$.
The weights of edges incident to the cluster formed by the merge are
given by a \emph{linkage measure} (discussed below).
A \emph{dendrogram} is a vertex-weighted tree where the leaves are the
initial clusters of the graph, the internal vertices correspond to
clusters generated by merges, and the weight of a vertex created by
merging two clusters $X,Y$ is given by the weight (similarity) of the 
edge between the two vertices at the time they are merged.

\myparagraph{Linkage Measures}
A linkage measure specifies how to reweight edges incident to a 
cluster created by a merge.
Many different linkage measures that have previously been studied
are applicable in the graph-based setting.
In \emph{\singlelink{}}, the weight between two clusters $(X,Y)$ is
$\max_{(x,y) \in \cut{X,Y}} w(x,y)$, or the maximum-similarity edge between
two vertices in $X$ and $Y$.
In \emph{\completelink{}} the weight between two clusters $(X,Y)$ is 
$\min_{(x,y) \in \cut{X,Y}} w(x,y)$, or the minimum-similarity edge between 
two vertices in $X$ and $Y$.
In the popular \emph{\unweightedavglink{} (UPGMA-linkage)} measure (often called the average-linkage measure) the similarity between two clusters
$(X,Y)$ is $\sum_{(x,y) \in \cut{X,Y}} w(x,y)/(|X| \cdot |Y|)$
or the total weight of inter-cluster edges between $X$ and $Y$, normalized by the number of possible inter-cluster edges.
The \emph{\weightedavglink{} (WPGMA-linkage)} measure is similar, but is
defined in terms of the current weights. 
If a cluster $Z$ is created by merging clusters
$X, Y$, the similarity of the edge between $Z$ and a neighboring cluster
$U$ is $\frac{\hacweight{X}{U} + \hacweight{Y}{U}}{2}$ if both the $(X,U)$
and $(Y,U)$ edges exist, and otherwise just the weight of the existing edge.

We define the \emph{best-neighbor} of a cluster $X$ to be the cluster $\argmax_{Y \in N(X)} \mathcal{W}(X,Y)$.
We call the edge connecting $X$ and $Y$ the \emph{best-edge} of $X$.
We say that a linkage measure is \emph{reducibile}~\cite{nn-chain}, if
for any three clusters $X,Y,Z$ where $X$ and $Y$ are mutual best-neighbors, it holds that $\mathcal{W}(X \cup Y, Z) \geq \max(\mathcal{W}(X,Z),
\mathcal{W}(Y,Z))$.

Our framework yields exact HAC algorithms for any reducible linkage
measure that also satisfies the following property.

\begin{definition} \label{def:trianglelink}
A linkage measure is called \emph{triangle-based} if it satisfies the following property.
Consider any step of the algorithm which merges clusters $B$ and $C$ into a cluster $B \cup C$.
Let $A$ be a cluster distinct from $B$ and $C$.
Then, if edge $(A, C)$ does not exist, $\hacweight{A}{B} = \hacweight{A}{B\cup C}$.
\end{definition}

In other words, the weight of an edge in an \trianglelink{}
measure changing implies that the affected cluster (a cluster not participating
in the merge) was part of a triangle with both of the clusters being
merged.

\begin{restatable}{lemma}{trianglelinkages}
Single-, complete-, and WPGMA-linkage are all triangle-based linkages.
\end{restatable}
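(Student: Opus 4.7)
My plan is to verify the triangle-based property by a direct case analysis on each of the three linkage measures, using their explicit update rules for the weight $\hacweight{A}{B \cup C}$ after the merge of $B$ and $C$. In every case the hypothesis that the edge $(A, C)$ does not exist causes the update to reduce to $\hacweight{A}{B}$, which is exactly what Definition~\ref{def:trianglelink} requires.

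For single-linkage, I first observe that ``edge $(A, C)$ does not exist'' in the current graph means there is no pair of original vertices $x \in A$, $y \in C$ joined by an edge in $G$, i.e., $\cut{A, C} = \emptyset$. Consequently $\cut{A, B \cup C} = \cut{A, B} \cup \cut{A, C} = \cut{A, B}$, and taking the maximum edge weight over the two sides gives the same value, so $\hacweight{A}{B \cup C} = \hacweight{A}{B}$.

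For complete-linkage and WPGMA-linkage I would argue in a unified way using the convention stated earlier in the paper that $\mathcal{L}(x, \bot) = x$. Under this convention the update rule for the new edge weight is $\mathcal{L}(\hacweight{A}{B}, \hacweight{A}{C})$, where $\hacweight{A}{C} = \bot$ precisely when $(A, C)$ is not present in the current graph. For complete-linkage, $\mathcal{L}$ is the $\min$ over $\cut{X, Y}$ restricted to present edges, which the paper already encodes via $\min(x, \bot) = x$; for WPGMA-linkage, the paper's explicit definition says that if only one of the two incident edges exists then the new weight equals that existing weight. In both cases, substituting $\bot$ in the second slot yields $\hacweight{A}{B \cup C} = \hacweight{A}{B}$.

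I do not anticipate a real obstacle here: the property holds essentially by construction, since each of these three linkage measures aggregates weights via an operation that ignores ``missing'' inputs. It may be worth remarking in passing that unweighted average-linkage (UPGMA) is conspicuously absent from the list precisely because its normalizing denominator $|A| \cdot (|B| + |C|)$ changes even when $\cut{A, C} = \emptyset$, which breaks the triangle-based property and is part of the reason UPGMA requires the more elaborate algorithm developed elsewhere in the paper.
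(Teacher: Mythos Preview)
Your proposal is correct and follows essentially the same approach as the paper: both verify Definition~\ref{def:trianglelink} directly from the linkage definitions, observing that when the $(A,C)$ edge is absent the update rule collapses to $\hacweight{A}{B}$. Your treatment is somewhat more explicit (working with $\cut{A,C}=\emptyset$ for single-linkage and invoking the $\mathcal{L}(x,\bot)=x$ convention for the others), whereas the paper compresses all three cases into a one-line remark; your closing observation about why UPGMA fails is a nice addition not present in the paper's proof.
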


Unfortunately, while \unweightedavglink{} is a reducible linkage measure, 
\emph{it is not a \trianglelink{}}.
We design a special algorithm
for \unweightedavglink{} in Section~\ref{sec:averagelink}.

\section{Algorithmic Framework}\label{sec:framework}

In this section we design an algorithmic framework for solving
graph-based HAC for \trianglelink{} measures. 
We give two different algorithms, based on the nearest-neighbor chain and heap-based algorithms respectively from the classic literature on HAC.

\myparagraph{Overview}
There are two key substeps found in both the classic nearest-neighbor chain and heap-based algorithms: 
(i) \emph{merging} two clusters to obtain a new cluster and 
(ii) finding the \emph{best-edge} (edge to the most similar neighbor) out of a given cluster.
The classic algorithms use simple ideas to implement both steps, for example, by
using a linear-time merge for step (i), or by checking the similarity between
all pairs of points for step (ii) in the case of the chain-based algorithm.

Unfortunately, directly applying these simple ideas to graphs yields algorithms with quadratic time-complexity.
For example, implementing (i) using a linear-time merge algorithm will,
on a $n$-vertex star graph $(m = O(n))$, take $\Theta(n^2)$ time for any sequence of merges.
A similar example shows that exhaustively searching a neighbor-list for step (ii) in
a chain-based algorithm may take up to $\Theta(n^2)$ time.

We address these problems by noting that we 
can reuse the data structures for clusters that are being merged,
and by using heap data structures that support efficient updates.
For example, when merging two clusters,
we can reuse data structures associated with both merged clusters since 
they are logically deleted after the merge.
Furthermore, for efficiency we represent neighbors of a cluster 
using data structures that can merge two instances of size 
$s, l $ where $s \leq l$ in $O(s\log{(l/s + 1)})$ time.
Our analysis shows that we can perform any sequence of merges, while 
performing best-edge queries on the intermediate graphs in $\tilde{O}(m)$ time.

\subsection{Algorithms}

\begin{algorithm}[!t]\caption{$\textsc{Merge}(A, B, \mathcal{L})$}\label{alg:merge}
    \begin{algorithmic}[1]
        \Require{Active clusters $A$ and $B$, \trianglelink{} $\mathcal{L}$.}
        \Ensure{Cluster id of the remaining active cluster.}
        \State $(A,B) = (\argmin(\degree{A}, \degree{B}), \argmax(\degree{A}, \degree{B})$\label{line:mergesetsmaller}.
        \State Remove $B$ from \heap{A} (and vice versa).
        \State $\heap{B} = \heap{A} \cup \heap{B}$ (using $\mathcal{L}$ to merge
        the weights of $C \in \heap{A} \cap \heap{B}$). \label{line:mergemerge}
        \State Mark cluster $A$ as inactive. \label{line:mergeinactive}
        \State For each $C \in \heap{A}$, update the cluster-id from $A$ to $B$ in \heap{C}, using $\mathcal{L}$ to merge if $B \in \heap{C}. $\label{line:mergemapsmaller}
        \State Return $B$.
    \end{algorithmic}
\end{algorithm}

\myparagraph{Data Structures and Common Primitives}
Initially all $v \in V$ are \emph{active} singleton clusters.
Each cluster $A$ maintains a \emph{\nghheap{}} data structure \heap{A},
which is abstractly a max-heap that stores the neighbors 
of cluster $A$. The heap elements are key-value pairs containing the 
neighbor's cluster id and the weight (similarity) to the neighbor. 
The \nghheap{} supports the \bestedgenoarg{} operation, which returns
the best (highest-priority) edge in $H$.
In addition, the data structure supports the \unionnoarg{} operation, 
which given two \nghheap{}s $H_1, H_2$, and a \trianglelink{} $\mathcal{L}$, creates 
$H_1 \cup H_2$, where the weights of the pairs in $\textsc{Keys}(H_1) \cap \textsc{Keys}(H_2)$ 
are merged using $\mathcal{L}$.

We consider two different representations of a \nghheap{}. 
The first is a deterministic representation using
augmented balanced trees where 
the augmented values are the heap priorities~\cite{blelloch16justjoin}.
We also consider a representation of \nghheap{}s using
mergeable heaps (e.g., Fibonacci heaps) combined with
hash tables, which enables a faster implementation of 
the chain-based algorithm (discussed further in the appendix).
If $s=\min(|H_1|,|H_2|), l = \max(|H_1|, |H_2|)$, the cost of \unionnoarg{} is $O(s\log(l/s + 1))$ using
augmented trees, and $O(s)$ amortized using mergeable heaps.
The cost for \bestedgenoarg{} in both implementations is $O(\log n)$.

Merging in both of our algorithms is performed using Algorithm~\ref{alg:merge}. 
Given two clusters $A$ and $B$, suppose without loss of generality that $A$ 
has fewer neighbors. The algorithm merges $A$ into $B$, where 
the weights to neighbors in $N(A) \cap N(B)$ are computed using the linkage
measure $\mathcal{L}$ (Lines~\ref{line:mergesetsmaller}--\ref{line:mergemerge}).
$A$ is then marked as inactive (Line~\ref{line:mergeinactive}).
Lastly, the algorithm maps over each neighboring cluster $C \in N(A)$,
deletes the entry for $A$ in \heap{C}, and inserts this entry with the new 
cluster ID $B$, merging using $\mathcal{L}$ if $B$ already exists in \heap{C}.
Critically, this algorithm ensures that after it runs, all clusters which previously
had edges to $A$ now point to $B$, and that the weights of all edges to $B$
are correctly updated using $\mathcal{L}$.

Next, we provide pseudocode for the two HAC algorithms in our framework.
Note that both algorithms output a dendrogram $D$, but for simplicity 
we do not show the pseudocode of this step 
(the dendrogram can easily be maintained as part of Algorithm~\ref{alg:merge}).

\begin{algorithm}[!t]\caption{\graphchain{}($G=(V, E, w), \mathcal{L}$)}\label{alg:nnchain}
    \begin{algorithmic}[1]
        \Require{Edge weighted graph, $G$, \trianglelink{} $\mathcal{L}$.}
        \Ensure{Dendrogram $D$ for $\mathcal{L}$-HAC.}
        \For{each cluster $v \in V$}
          \If {$v$ is active}
            \State Initialize stack $S$, initially containing only $v$.
            \While {$S$ is not empty}
              \State Let $t$ be $\textsc{Top}(S)$.
              \State Let $b = \textsc{BestEdge}(t)$.\label{line:nnchainbestedge} 
              \If {$b$ is already on $S$}
                \State $\textsc{Pop}(S)$.
                \State $\textsc{Merge}(t, \textsc{Top}(S))$.\ \ 
                [Algorithm~\ref{alg:merge}] \label{line:nnchainmerge}
                \State $\textsc{Pop}(S)$.
              \Else
                \State Push $b$ onto $S$.
              \EndIf
            \EndWhile
          \EndIf
        \EndFor
    \end{algorithmic}
\end{algorithm}
Algorithm~\ref{alg:nnchain} shows the pseudocode for our \chainbased{} algorithm.
The structure of our algorithm is similar to the classic nearest-neighbor
chain algorithm~\cite{murtagh1983survey}, using a stack to maintain a path of best-neighbors and merging two
vertices that are connected by a reciprocal best-edge.

Algorithm~\ref{alg:heapbased} shows the pseudocode for our \heapbased{} algorithm.
Our algorithm uses a \emph{lazy} approach for handling edges in 
the global heap $H$ which point to inactive clusters (Lines~\ref{line:heapcheckinactive}--\ref{line:heapinsertfirst}).
Although we could potentially eagerly update the heap $H$ when deactivating vertices
in Algorithm~\ref{alg:merge} without an asymptotic increase in the running time, 
the lazy version is simpler to describe.

\subsection{Analysis}
We start with the following lemma, which bounds the cost of 
the \mergenoarg{} operations performed in Algorithm~\ref{alg:merge} for 
any sequence of $n-1$ merges.

\begin{restatable}{lemma}{lemmergecost} \label{lem:mergecost}
Let $m_1,\ldots, m_{n-1}$ be the
sequence of merge operations performed by an algorithm where $m_i = (u_i, v_i)$. 
Let $\mergecost{m_i} = \min(\degree{u_i}, \degree{v_i})$, where \degree{u_i} and \degree{v_i}
are the degrees of the clusters when they are merged. Then, the total cost $\sum_{i=1}^{n-1} \mergecost{m_i} = O(m \log n)$.
\end{restatable}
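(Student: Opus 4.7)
The plan is to prove this by a small-to-large (union-by-size) charging argument, pushing the per-merge cost down to the \emph{original} vertices and bounding how often each original vertex can appear on the ``smaller'' side of a merge.

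First I would loosen the cost in two steps so that the charging becomes transparent. Although the algorithm picks the smaller side by \emph{degree}, we have for any two positive numbers $\min(a,b) \leq a$ and $\min(a,b) \leq b$, so we may upper bound $\mergecost{m_i} = \min(\degree{u_i}, \degree{v_i})$ by $\degree{C_i}$ where $C_i \in \{u_i, v_i\}$ is the cluster of smaller \emph{size} (i.e., containing fewer initial vertices); ties are broken arbitrarily. Next, I would bound $\degree{C_i}$ in terms of the \emph{original} graph. Every neighboring cluster of $C_i$ in the current graph must be connected to $C_i$ by at least one original edge, so
\[
\degree{C_i} \;\leq\; \bigl|\{(x,y) \in E : x \in C_i,\ y \notin C_i\}\bigr| \;\leq\; \sum_{v \in C_i} \deg_G(v),
\]
where $\deg_G(v)$ denotes the degree of $v$ in the input graph $G$.

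Then I would swap the order of summation and apply the classical union-by-size bound. Summing over all merges,
\[
\sum_{i=1}^{n-1} \mergecost{m_i} \;\leq\; \sum_{i=1}^{n-1} \sum_{v \in C_i} \deg_G(v) \;=\; \sum_{v \in V} \deg_G(v) \cdot \bigl|\{\, i : v \in C_i \,\}\bigr|.
\]
The key combinatorial fact is that each initial vertex $v$ is contained in the smaller-by-size cluster at most $\lceil \log_2 n \rceil$ times: every time $v$ lies in $C_i$, the size of the cluster containing $v$ at least doubles after the merge (because $|C_i| \leq |u_i \cup v_i|/2$), and cluster sizes are bounded by $n$. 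Substituting this bound and using $\sum_v \deg_G(v) = 2m$ yields
\[
\sum_{i=1}^{n-1} \mergecost{m_i} \;\leq\; \log_2 n \cdot \sum_{v \in V} \deg_G(v) \;=\; O(m \log n),
\]
which is the claim.

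The only delicate point is the mismatch between the two notions of ``smaller'' — the algorithm measures by degree but the doubling argument requires a size-based (or otherwise monotone and bounded-by-$n$) measure. This is handled cleanly by the opening step, which uses the trivial inequality $\min(\degree{u_i}, \degree{v_i}) \leq \degree{C_i}$ to replace the algorithm's choice by the size-smaller cluster; after that, the analysis is completely standard and there is no further obstacle.
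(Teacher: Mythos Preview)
Your proof is correct and is essentially the same small-to-large charging argument as the paper's. The only difference is bookkeeping: the paper assigns tokens to the current-graph edges incident to the degree-smaller side and uses the potential $\mathcal{T}(C)=\sum_{v\in C}\deg_G(v)$ (bounded by $2m$) as the quantity that doubles, whereas you first pass to the \emph{size}-smaller side, charge to original vertices weighted by $\deg_G$, and let cluster size (bounded by $n$) be the doubling quantity; your explicit handling of the degree-versus-size mismatch makes the doubling step a bit more transparent, but the two arguments are interchangeable.
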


\begin{restatable}{theorem}{thmruntime}\label{thm:runtime}
There are deterministic implementations of the chain-based and heap-based algorithms
that run in $O(m\log^2 n)$ time for any \trianglelink{} $\mathcal{L}$.
\end{restatable}

\begin{algorithm}[!t]\caption{\graphheap{}($G=(V, E, w), \mathcal{L}$)}\label{alg:heapbased}
    \begin{algorithmic}[1]
        \Require{Edge weighted graph, $G$, \trianglelink{} $\mathcal{L}$.}
        \Ensure{Dendrogram $D$ for $\mathcal{L}$-HAC.}
        \State Let $H$ be a max-heap storing the highest-weight edge incident to each active cluster in the graph.
        \While{$|H| > 1$}
          \State Let $e=(u, v, \hacweight{u}{v})$ be the max edge in $H$.
          \State Delete $e$ from $H$.
          \If {$v$ is inactive} \label{line:heapcheckinactive}
            \State Let $e' = (u, v', \hacweight{u}{v'}) = \bestedge{u}$. \label{line:heapbestfirst}
            \State Insert $e'$ into $H$. \label{line:heapinsertfirst}
          \Else
            \State $x = \textsc{Merge}(u, v)$.\ \ [Algorithm~\ref{alg:merge}] \label{line:heapmerge}
            \State Let $e' = (x, y, \hacweight{x}{y}) = \bestedge{x}$. \label{line:heapbestsecond}
            \State Insert $e'$ into $H$.
          \EndIf
        \EndWhile
    \end{algorithmic}
\end{algorithm}

\begin{proof}[Proof sketch]
We sketch the proof for the chain-based algorithm.
The complete proof is given in the appendix.

Clearly, there are at most $n-1$ \mergenoarg{} operations and the total number of merged elements is bounded by $O(m \log n)$ due to Lemma~\ref{lem:mergecost}.
Using an augmented tree, the amortized cost of merging a single element is bounded by $O(\log n)$, which results in the total time of $O(m \log^2 n)$.

From the properties of the nearest-neighbor chain algorithm  at most $2n$ elements are ever added to the stack, which implies that the \bestedgenoarg{} operation is called $O(n)$ times. Hence all \bestedgenoarg{} operations take  $O(n \log n)$ time.
\end{proof}

We show that in the case of the chain-based algorithm, we can obtain an
asymptotically faster algorithm by using hash tables and Fibonacci heaps to represent the \nghheap{}s. 

\begin{restatable}{theorem}{frameworkchainbased}
There is a randomized implementation of the chain-based algorithm
that runs in $O(m\log n)$ time in expectation for any \trianglelink{} $\mathcal{L}$.
\end{restatable}

\section{Average Linkage}\label{sec:averagelink}

The key challenge for HAC using the average-linkage (UPGMA-linkage)
measure is that merging two clusters into a new cluster affects the weights of 
\emph{all edges} incident to the new cluster, and thus our framework from
Section~\ref{sec:framework} is not directly applicable. 
We show that by carefully modifying our
framework, we can obtain an efficient exact algorithm for 
average-linkage that runs in sub-quadratic time.

\myparagraph{Overview}
Two simple ideas to support average-linkage in our framework are 
a \emph{fully eager} approach, which updates all of the edges 
incident to a merged cluster, and a \emph{fully lazy} approach,
which updates none of the edges incident to a merged cluster and forces
a \bestedge{v} computation to spend $d(v)$ time. Unfortunately, simple 
examples show that both approaches can be forced to spend $\Theta(n^2)$ time 
for an $m = O(n)$ edge graph (e.g., a star on $n$ vertices).

Our approach is to enable the HAC algorithm to perform \bestedgenoarg{}
queries while only updating a \emph{subset} of the edges incident to a
newly-merged cluster.
We achieve this by using a \dynorient{} data structure, which maintains
an dynamic $O(\alpha)$-outdegree orientation where $\alpha$ is the arboricity
of the current graph. 
Our observation is that
using this
data structure lets us maintain information about the 
current clustered graph using a \emph{bounded amount of laziness}.
Specifically, we maintain the invariant that each vertex stores the up-to-date weight of all its \emph{incoming} edges.

We show that we can perform the $i$-th merge with an extra
cost of $O(\alpha_i)$ (in addition to the merge-cost in the framework)
where $\alpha_i$ is the arboricity of the current graph at the $i$-th merge.
Similarly, we can perform a \bestedgenoarg{} operation in $O(\alpha_i)$ time.
Using the fact that $\forall i, \alpha_i \leq \sqrt{m}$, we can obtain a
sub-quadratic bound as long as the number of \bestedgenoarg{}
operations is $o(n^{2} / \sqrt{m})$.
Unfortunately, the heap-based algorithm could perform $O(m\log n)$ 
\bestedgenoarg{} computations, but we are guaranteed that the chain-based algorithm 
will only perform $O(n)$ \bestedgenoarg{} computations, and thus gives an algorithm
with $\tilde{O}(n\sqrt{m})$ time-complexity.

\begin{algorithm}[!t]\caption{$\textsc{FlipEdge}(A, B)$}\label{alg:flipedge}
    \begin{algorithmic}[1]
        \Require{Edge oriented from active clusters $A$ to $B$.}
        \Ensure{Edge oriented from $B$ to $A$.}
        \State $w = \hacweight{A}{B}$. [true weight of the edge]
        \State Update the edge $(A, B, w)$ in \heap{A}.
    \end{algorithmic}
\end{algorithm}

\begin{algorithm}[!t]\caption{$\textsc{UpdateOrientation}(A, B, \mathcal{EO})$}\label{alg:updateorientation}
    \begin{algorithmic}[1]
        \Require{Active clusters $A$ and $B$, dynamic orientation structure $\mathcal{EO}$.}
        \State $(A,B) = (\argmin(\degree{A}, \degree{B}), \argmax(\degree{A}, \degree{B})$.
        \State For each $C \in N(A)$, delete $(C, A)$ and insert $(C, B)$ into the orientation data structure. Edge flips are handled using \textsc{FlipEdge} [Algorithm~\ref{alg:flipedge}]
        \For {each edge $(B, C)$ oriented out of $B$}
            \State $w = \hacweight{B}{C}$. [true weight of the edge]
            \State Update the edge $(B, C, w)$ in \heap{C}.
        \EndFor
    \end{algorithmic}
\end{algorithm}

\myparagraph{Algorithm}
The differences between our exact average-link algorithm and 
Algorithm~\ref{alg:nnchain} are (1) that we maintain a \dynorient{}
structure $\mathcal{EO}$ and (2) that we run extra procedures 
before performing the \bestedgenoarg{} and \mergenoarg{} algorithms used by 
Algorithm~\ref{alg:nnchain}. We also make a minor change in how 
the weights of edges in each cluster's neighbor-heaps are stored.

\emph{Before a \textsc{Merge}($A$, $B$).} Before merging two active clusters
using \textsc{Merge} (Algorithm~\ref{alg:merge}) we call \textsc{UpdateOrientation}
(Algorithm~\ref{alg:updateorientation}). This algorithm updates $\mathcal{EO}$
by deleting all edges going to the smaller deactivated cluster ($A$), 
and relabeling and inserting these edges to refer to the remaining 
active cluster ($B$).
Note that the orientation data structure could cause a number of edges
to have their orientation \emph{flipped}. We inject Algorithm~\ref{alg:flipedge}, 
which is called each time an edge is flipped
and which updates the weight of 
the edge to its correct weight at the head of the new direction 
of the edge. The last step in Algorithm~\ref{alg:updateorientation} is
to update the weights for each of the edges oriented out of the active
cluster $B$ in the heap of this directed neighbor of $B$.

\emph{Before a \bestedge{A}.} Before extracting the best-edge from \heap{A},
the algorithm updates each of the edges currently oriented out of $A$ in
$\mathcal{EO}$. Specifically, for such a directed edge $(A, B)$, it
computes the true weight of this edge and updates this value in \heap{u}.
Performing this update is necessary, since $B$ could have updated its 
size since the last time the $(A,B)$ edge was updated in \heap{u}.

\emph{Weight Representation in Neighbor-Heaps.} If we
store the weights of edges in each \nghheap{}, then when a cluster's size increases 
through a merge, we must update all of the edges incident to the new cluster 
since the weights of all edges change.
Instead, we store only part of the edge weights. Specifically,
for an edge to $B$ incident to cluster $A$
we store $\frac{1}{|B|} \sum_{(a,b) \in \cut{A,B}} w(a,b)$
in \heap{A}, and implicitly multiply this quantity by $1/|A|$, explicitly 
multiplying by this quantity when extracting an actual weight.

We obtain our exact average-linkage algorithm using a 
specific \dynorient{} data structure. Specifically, we use the 
recent data structure of~\cite{henzinger20dynamic} which maintains a $(O(\alpha_i), O(\log^2 n))$-\dynorient{}
data structure $\mathcal{EO}$. The out-degree of the orientation
maintained is \emph{adaptive}, i.e., the out-degree of vertices in $\mathcal{EO}$ 
after the $i$-th update is $O(\alpha(G_i))$ where $G_i$ is the graph at the time 
of the $i$-th update.

\begin{restatable}{theorem}{exactavglinkcomplexity}
The exact average-linkage algorithm is correct and runs in $\tilde{O}(n\sqrt{m})$ 
time for arbitrary graphs.
\end{restatable}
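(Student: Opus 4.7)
The plan is to split the statement into correctness and runtime. Correctness reduces to maintaining one weight-invariant through every operation, and runtime combines Lemma~\ref{lem:mergecost}, the amortized guarantees of the Henzinger et al.\ dynamic-orientation data structure, and the $O(n)$ bound on the number of \bestedgenoarg{} queries issued by the chain-based skeleton of Algorithm~\ref{alg:nnchain}.

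For correctness I would maintain the following invariant: at every moment and for every active cluster $A$, whenever an edge is oriented from $B$ to $A$ in $\mathcal{EO}$, $\heap{A}$ stores the value $\frac{1}{|B|}\sum_{(a,b)\in\cut{A,B}} w(a,b)$ for this edge, so that multiplying by the implicit $1/|A|$ recovers the true average-linkage weight. Checking this invariant operation by operation is then mechanical. \textsc{FlipEdge} fires precisely when an edge becomes incoming at $A$ and overwrites $\heap{A}$ with the current true value, restoring the invariant at the new head. Inside \textsc{UpdateOrientation}, the edge deletions and insertions in $\mathcal{EO}$ are repaired by the induced \textsc{FlipEdge} calls, and the explicit final loop rewrites $\heap{C}$ for every edge still oriented out of the enlarged cluster $B$, which is exactly the set of entries where the merge could have broken the invariant. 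Inside \textsc{Merge} itself, the average-linkage combine rule used when two neighbor-heap entries collide is the weighted average $\frac{|B_1|\,x_1+|B_2|\,x_2}{|B_1|+|B_2|}$, which a direct algebraic check shows preserves the invariant whether $C$ is adjacent to one or both of $B_1,B_2$ (the single-adjacency case being an explicit rescaling from $|A|$ to $|A|+|B|$). Finally, immediately before each \bestedge{A} call the algorithm refreshes every entry in $\heap{A}$ that is currently outgoing from $A$, so that at query time the whole of $\heap{A}$ is exact; correctness of the underlying nearest-neighbor-chain framework for reducible linkages then delivers overall correctness.

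For the runtime I would bound three kinds of work. (i)~The baseline framework cost: Lemma~\ref{lem:mergecost} together with $O(\log n)$ per augmented-tree operation gives $O(m\log^2 n)$ for the merges themselves. (ii)~Maintenance of $\mathcal{EO}$: each merge triggers $O(\min(\degree{A},\degree{B}))$ updates to $\mathcal{EO}$, so Lemma~\ref{lem:mergecost} caps the total number of updates at $O(m\log n)$, each costing $O(\log^2 n)$ amortized by Henzinger et al., which is \Ot{m} overall and absorbs the induced \textsc{FlipEdge} work since each flip adds only an $O(\log n)$ heap touch. (iii)~The refresh passes: each \bestedge{A} query and each final loop of \textsc{UpdateOrientation} iterates over the $O(\alpha_i)$ edges oriented out of a single cluster, paying $O(\log n)$ per heap touch. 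Since the chain-based skeleton issues only $O(n)$ \bestedgenoarg{} queries and performs $n-1$ merges, this contribution sums to $\sum_{i} O(\alpha_i\log n)=O(n\sqrt{m}\log n)$, using $\alpha_i=O(\sqrt{m})$ on any subgraph of an $m$-edge graph. Summing (i)--(iii) yields \Ot{n\sqrt{m}}.

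The main obstacle I anticipate is the correctness bookkeeping in the ``relabel without combining'' case of \textsc{Merge}, where a cluster $C$ is adjacent to $A$ but not to the partner $B$: the entry in $\heap{C}$ must not only be relabeled from $A$ to $A\cup B$ but also rescaled by $|A|/(|A|+|B|)$ for the invariant to hold, and threading this rescaling cleanly through the framework (without paying for it separately at every neighbor of $A$) is the delicate step. A minor secondary item is verifying that the amortized \textsc{FlipEdge} work induced by the Henzinger et al.\ structure fits inside the polylogarithmic slack hidden by \Ot{\cdot}, which is essentially bookkeeping given their $O(\log^2 n)$ amortized update guarantee.
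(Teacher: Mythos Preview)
Your proposal is correct and mirrors the paper's proof: your invariant that incoming-edge weights are stored exactly is precisely Lemma~\ref{exactavg:maintainedgraph}, and your three-part runtime decomposition (framework merges via Lemma~\ref{lem:mergecost}, $\mathcal{EO}$ maintenance at $O(\log^2 n)$ amortized per update, and $O(\alpha_i)$-sized refresh passes before each of the $O(n)$ chain operations, bounded using $\alpha_i \leq \sqrt{m}$) matches the paper's analysis almost term for term.

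One small clarification on your flagged obstacle: the single-adjacency rescaling for $C\in N(A)\setminus N(B)$ is not actually the delicate case, since Line~\ref{line:mergemapsmaller} of \textsc{Merge} already visits every such $C$ and can rescale in passing at no extra asymptotic cost. The case the orientation machinery is really built to handle is $C\in N(B)\setminus N(A)$: there too the entry in $\heap{C}$ for $B$ goes stale (because $|B|$ grows), but we cannot afford to iterate over all of $N(B)$. Your invariant absorbs this cleanly, since $\heap{C}$ need only be exact when the edge is oriented \emph{into} $C$, and the final loop of \textsc{UpdateOrientation} touches exactly the $O(\alpha_i)$ edges oriented out of the merged cluster---so the obstacle you anticipate dissolves once you note which side of the merge is the expensive one.
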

\begin{proof}
We provide a proof-sketch here, and provide a detailed proof in the appendix.
Note that the \dynorient{} data structure is only updated and used during the
\mergenoarg{} and \bestedgenoarg{} operations. Consider the $i$-th such operation, 
and let $G_i$ be the graph induced by the current clustering
at the time of this operation. 
It is easy to check that the cost for a \mergenoarg{} operation 
is $O(\alpha(G_i) + \mergecost{m_i}\log^2 n)$ 
and the cost for a \bestedgenoarg{} operation is $O(\log n + \alpha(G_i))$.
Therefore, the total number of updates to $\mathcal{EO}$ is $\sum_{i=1}^{n-1} \mergecost{m_i} = O(m\log n)$,
and the overall time-complexity of updating $\mathcal{EO}$ is $O(m\log^3 n)$.
Since $\forall i, \alpha(G_i) \leq \sqrt{m}$ and there are $O(n)$ merge
and best-edge operations, the time for updating the \nghheap{}s is
$O(n \sqrt{m} + n\log n)$. Thus the total time-complexity of the algorithm is
$\tilde{O}(n\sqrt{m} + m) = \tilde{O}(n\sqrt{m})$.
\end{proof}

Our approach yields near-linear time exact algorithms for graphs
whose minors all have bounded arboricity. Specifically, we show that for minor-closed
graphs, such as  planar graphs, bounded genus graphs, and bounded treewidth graphs we
obtain algorithms with $\tilde{O}(n)$ time-complexity.

\begin{corollary}\label{cor:minorclosed}
The exact average-linkage algorithm runs in $\tilde{O}(n \cdot t)$ time on any graph from a minor-closed graph family whose elements all have arboricity at most $t$.
\end{corollary}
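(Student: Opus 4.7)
The plan is to revisit the running-time accounting in the proof of the previous theorem and observe that every bound that was previously controlled by $\sqrt{m}$ (coming from the worst-case bound $\alpha \leq \sqrt{m}$) can now be replaced by $t$, using the minor-closure hypothesis.

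First, I would verify the key structural observation: each intermediate graph $G_i$ in the HAC sequence is a minor of $G_1 = G$. This is because a cluster merge contracts an edge between two clusters and merges any resulting parallel edges (combining their weights via the linkage measure). Since $G_1$ belongs to a minor-closed family whose members all have arboricity at most $t$, every $G_i$ also lies in this family and therefore satisfies $\alpha(G_i) \leq t$. As a side consequence, because arboricity lower-bounds $m/n$ (as noted in the preliminaries), the original graph has $m = O(nt)$, so $\log m = O(\log n)$ and we may freely absorb the distinction into the $\tilde{O}$ notation.

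Next, I would simply rerun the bookkeeping from the proof of the exact average-linkage theorem, substituting $\alpha(G_i) \leq t$ wherever $\sqrt{m}$ appeared. The dynamic-orientation data structure of~\cite{henzinger20dynamic} maintains an orientation with out-degree $O(\alpha(G_i)) = O(t)$ at all times and handles each edge update in $O(\log^2 n)$ amortized time. By Lemma~\ref{lem:mergecost}, the total number of edge updates fed into $\mathcal{EO}$ across all merges is $O(m \log n) = O(nt \log n)$, so the total cost of maintaining $\mathcal{EO}$ is $O(nt \log^3 n) = \tilde{O}(nt)$. For the $O(n)$ \bestedgenoarg{} and \mergenoarg{} operations, the extra work charged to orientation-related updates (scanning the out-neighbors of the active cluster to refresh true weights and update neighbor-heaps) is $O(\alpha(G_i) \log n) = O(t \log n)$ per operation, contributing $O(nt \log n) = \tilde{O}(nt)$ in total. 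All remaining costs are dominated by these terms.

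I do not expect a serious obstacle here: the only substantive point is justifying that the contracted graphs stay in the minor-closed family, which is standard. The rest is just re-deriving the additive bounds of the previous theorem with $t$ in place of $\sqrt{m}$ and observing that $m = O(nt)$ keeps the $\mathcal{EO}$-update term within $\tilde{O}(nt)$. Summing the two $\tilde{O}(nt)$ contributions yields the claimed $\tilde{O}(n \cdot t)$ bound.
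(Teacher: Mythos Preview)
Your proposal is correct and matches the paper's intended derivation: the corollary is stated without proof and is meant to follow immediately from the theorem's time bound $O(m\log^{4} n + n\log n \cdot \alpha_{\max})$ by observing that each intermediate graph $G_i$ is a minor of $G$, so $\alpha_{\max} \leq t$, and that $m = O(nt)$ since $\alpha = \Omega(m/n)$. Your accounting of the $\mathcal{EO}$-maintenance cost drops one $\log n$ factor relative to the appendix proof (which also charges $O(\log n)$ per edge flip for the heap update), but this is harmless under $\tilde{O}$.
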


\subsection{Approximation algorithm}
Next, we show that average-linkage can be approximated in nearly-linear time.
An \emph{$\epsilon$-close HAC algorithm} is an algorithm which only 
merges edges that have similarity at least 
$(1 - \epsilon) \cdot \mathcal{W}_{\max}$ where $\mathcal{W}_{\max}$
is the largest weight currently in the graph~\cite{48657}.
An $\epsilon$-close algorithm is constrained to merge
an edge that is ``close'' in similarity to the merge the exact algorithm 
would perform. At the same time, the definition
gives the algorithm flexibility in which edge it chooses to merge, which it
can exploit to save work.

The idea of our algorithm is to maintain an extra counter for each 
cluster which stores the size the cluster had at the last time 
that the algorithm updated \emph{all} of the incident edges of 
the cluster. Call this variable the \emph{staleness}, $\staleness{A}$, of a
given cluster $A$. Our algorithm maintains the following invariant:
\begin{invariant}\label{inv:staleness}
For any active cluster $A$, $|A| < (1+\epsilon)\staleness{A}$.
\end{invariant}
We maintain this invariant by checking after merging two clusters
if the size of the remaining active cluster $A$ is still smaller
than $(1+\epsilon)\staleness{A}$. If the invariant is violated, 
then the algorithm performs a \emph{rebuild} which updates the 
similarities of all edges incident to $A$ to their true 
weights. Note that these updates are performed on the \nghheap{}s
for both endpoints of the updated edges.

Since Invariant~\ref{inv:staleness} can be violated at most 
$O(\log_{1+\epsilon} n)$ times, any given edge will be processed 
during a rebuild at most $O(\log_{1+\epsilon} n)$ times over the course 
of the algorithm, for a total cost of $O(m \log_{1+\epsilon} n)$.

We note that we only apply this approximation idea with 
the \emph{heap-based} algorithm from our framework.
Although it may be possible to combine this notion of approximation 
with the chain-based algorithm, the analysis needs to handle the fact
that the chain-based algorithm merges local minima instead of global minima,
and so for simplicity we only consider the heap-based approach.

\myparagraph{Theoretical Guarantees}
To argue that our approximation algorithm yields a $\epsilon$-close HAC algorithm
for average-linkage, we show that the similarity of any edge stored 
within an active cluster's neighbors cannot be much larger than its 
true edge similarity. Let the stored similarity of an edge 
$(u,v)$ in the neighborhood be denoted \storedwgh{u,v} and 
the true similarity of this edge be \truewgh{u,v}.

\begin{restatable}{lemma}{weightsapproxclose}
Let $e=(U, V)$ be an edge in the neighborhood of an active cluster $U$. Then, $(1+\epsilon)^{-2}\storedwgh{U,V} \leq \truewgh{U,V}.$
\end{restatable}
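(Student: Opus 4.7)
The plan is to reduce the lemma to a bound on the staleness of the size $|V|$ that is baked into $\heap{U}$'s entry for the edge to $V$. By the weight representation described in Section~\ref{sec:averagelink}, the stored value has the form $\frac{1}{|V|^*}\sum_{(u,v)\in\cut{U,V}} w(u,v)$, where $|V|^*$ is the size of $V$ used when this entry was last set. The stored sum is always up to date, since whenever the cut $\cut{U,V}$ changes (by a merge involving $U$, $V$, or a common neighbor), the linkage operation refreshes the sum correctly. Because the algorithm multiplies by $1/|U|_{\text{cur}}$ at extraction, we obtain $\storedwgh{U,V} = \frac{\sum}{|U|_{\text{cur}}|V|^*}$ while $\truewgh{U,V} = \frac{\sum}{|U|_{\text{cur}}|V|_{\text{cur}}}$, so $\storedwgh{U,V}/\truewgh{U,V} = |V|_{\text{cur}}/|V|^*$. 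It therefore suffices to show that $|V|^* \geq |V|_{\text{cur}}/(1+\epsilon)^2$.

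I would then establish, by induction on the sequence of operations performed by the algorithm, the auxiliary invariant $|V|^* \geq \staleness{V}/(1+\epsilon)$ for every active entry. Combining this with Invariant~\ref{inv:staleness} ($|V|_{\text{cur}} \leq (1+\epsilon)\staleness{V}$) yields the desired bound. The easy cases are the rebuilds: a rebuild of $V$ sets both $\staleness{V}$ and every $|V|^*$ to $|V|_{\text{cur}}$, and a rebuild of $U$ refreshes $|V|^* = |V|_{\text{cur}} \geq \staleness{V}$. The merge cases are handled by tracing the linkage operation that combines two entries carrying their own $|V|^*$ values.

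The main obstacle is the \emph{relabel} subcase of a merge, in which the entry for an absorbed cluster $V'$ is simply re-keyed to refer to the survivor $V$ without recomputing its stored value; here $|V|^*$ is inherited from $V'$ and bears no a priori relationship to the survivor's $\staleness{V}$. The key tool for handling this is the post-merge rebuild check of Section~\ref{sec:averagelink}: whenever the survivor grows enough to violate Invariant~\ref{inv:staleness}, the algorithm rebuilds and refreshes all of its entries (including the relabeled ones) to the current $|V|_{\text{cur}}$. In the complementary no-rebuild case the merge check itself bounds how much the survivor can have grown, so the inherited $|V|^*$ remains close enough to $\staleness{V}$ for the auxiliary invariant. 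Verifying each subcase of the merge (relabel and combine) against the linkage operation completes the induction, and substituting back into the ratio $|V|_{\text{cur}}/|V|^*$ gives the $(1+\epsilon)^2$ factor.
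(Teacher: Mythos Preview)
Your route diverges from the paper's in two respects. First, you import the partial-weight representation (store $\sum/|B|$ in $\heap{A}$ and multiply implicitly by $1/|A|$) from the \emph{exact} average-linkage algorithm of Section~\ref{sec:averagelink}; the approximate algorithm is built directly on the heap-based framework and stores full similarities, so \emph{both} normalizing sizes can be stale, not just the neighbor's. Second, and more importantly, the paper neither introduces nor needs an auxiliary invariant on $|V|^*$. Its argument is a direct two-case split: any merge that alters the numerator $\sum_{(u,v)\in\cut{U,V}} w(u,v)$ (a ``Type~1'' merge, your combine case) resets the stored weight to the true weight, so since the last such reset only size-changing (``Type~2'') merges have occurred and the sum is unchanged; Invariant~\ref{inv:staleness} applied separately to $U$ and to $V$ then contributes one factor of $(1+\epsilon)$ from each endpoint, and that is where the square comes from. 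No induction over the operation sequence and no auxiliary invariant are required.

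Your relabel-subcase argument also has a genuine gap. In the no-rebuild branch you assert that ``the inherited $|V|^*$ remains close enough to $\staleness{V}$,'' but the post-merge check only bounds $|V|_{\mathrm{cur}}$ against $\staleness{V}$; it gives no control whatsoever over the value $|V'|^*$ inherited from the absorbed cluster, which is what your auxiliary invariant is about. Concretely, if a singleton $V'$ (so $|V'|^*=1$) is absorbed into a survivor $V$ with $\staleness{V}=100$ and $\epsilon=0.1$, no rebuild fires, yet after relabeling the entry carries $|V|^*=1 \ll \staleness{V}/(1+\epsilon)$, and your auxiliary invariant fails outright at that step. The paper's framing sidesteps this by tying the normalization directly to $\staleness{U}$ and $\staleness{V}$ rather than to a per-entry $|V|^*$.
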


This lemma implies that if we merge this edge, then in the worst case,
we will get a $(1+\epsilon)^{-2}$-close algorithm since the highest
similarity edge could have a true similarity of $\storedwgh{e}$, but
the edge merged by the algorithm could have a true similarity of 
$\truewgh{e} = (1+\epsilon)^{-2}\storedwgh{e}$. 
By setting the value of $\epsilon$ that we use internally appropriately, we
obtain the following result.

\begin{restatable}{theorem}{appxalgorithmbound}
There is an $\epsilon$-close HAC algorithm for the average-linkage
measure that runs in $O(m\log^2 n)$ time.
\end{restatable}
\section{Empirical Evaluation}

We implemented our framework algorithms in C++ using 
the Graph Based Benchmark Suite (GBBS)~\cite{dhulipala18scalable, dhulipala20grades},
and using the augmented maps from 
PAM~\cite{blelloch16justjoin} to represent \nghheap{}s.
We provide more details about our implementations in the appendix.
We run our experiments on a 72-core machine with $4\times 2.4\mbox{GHz}$
Intel 18-core E7-8867 v4 Xeon processors, and 1\mbox{TB} of memory.
Both GBBS and PAM are designed for parallel algorithms, but 
to enable a fair comparison with other sequential algorithms we disabled parallel execution.

\begin{figure}[t]
\begin{center}
\includegraphics[scale=0.7]{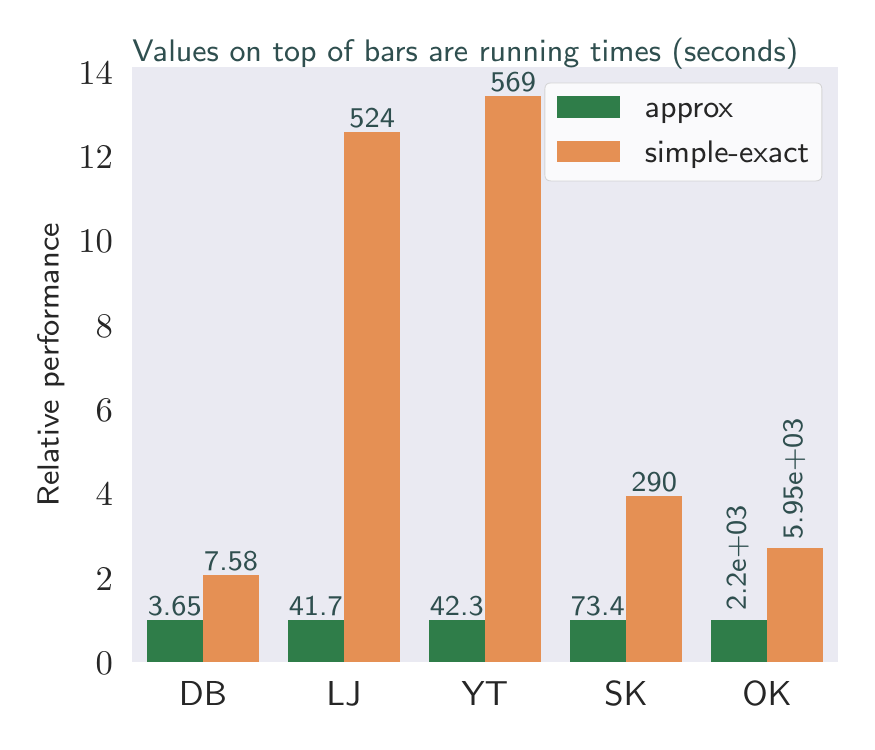}
\caption{Relative performance of the approximate and 
simple-exact average-linkage algorithms on real-world graphs 
from the SNAP datasets, normalized to the fastest time per graph. 
The value on top of each bar is the running time in seconds.
\label{fig:avg_link_graphs}}
\end{center}
\vskip -0.1in
\end{figure}

\subsection{Experiments}

\myparagraph{Approximate vs. Simple-Exact Algorithm}
We start by evaluating the performance of our near-linear time
approximation algorithm for average-linkage vs.~a simple implementation of 
an exact average-linkage algorithm which updates the weights of all edges incident
to a newly merged cluster.
We ran this experiment on a collection of large real-world graphs from
the SNAP datasets. Since these graphs are originally unweighted, we set 
the similarity of an edge $(u,v)$ to $\frac{1}{\log(\degree{u} + \degree{v})}$.
We provide more details about our graph inputs in the appendix.
Figure~\ref{fig:avg_link_graphs} shows the result of the experiment.
Our approximate average-linkage algorithm (using $\epsilon = 0.1$) achieves an average speedup of 6.9x
over the exact average-linkage algorithm.
We note that the dendrograms in the cases where the simple-exact algorithm
performs reasonably well are shallower than in cases where the algorithm performs 
poorly (e.g., the DB dendrogram is 99x shallower than that of YT, although the
number of vertices in DB is only 2.6x smaller).
A simple upper-bound for the time-complexity of the simple-exact algorithm
is $O(nD)$ where $D$ is the depth of the dendrogram.
For other linkage measures, such as single-, complete-, and WPGMA-linkage,
we achieve up to 730x speedup over the simple-exact algorithm
that spends $O(d(u) + d(v))$ time to merge
two clusters, and note that the dendrograms observed for these measures have very high depth.

\myparagraph{Comparison with Metric Clustering}
Next, we study the quality and scalability of our graph-based HAC algorithms compared 
to metric HAC algorithms.
Given an input pointset, $P$, we first apply an approximate $k$-NN
algorithm to $P$ to build an approximate $k$-NN graph.
We use the state-of-the-art ScaNN $k$-NN library~\cite{scann} for 
this graph-building step.
We note that ScaNN internally uses multithreading, which we did not disable.
We then symmetrize the $k$-NN graph and run our graph-based HAC implementation on it.
We compare our results with those of the widely-used Scikit-learn (\emph{sklearn}) package.

\emph{Quality.} In the first set of experiments, we evaluate our algorithms and the four
HAC variants supported by sklearn on the \emph{iris}, \emph{wine}, \emph{digits},
and \emph{cancer}, and \emph{faces} classification datasets.
We note that the heap-based and chain-based algorithms yielded the same dendrograms.
To measure quality, we use the Adjusted Rand-Index (ARI) and Normalized Mutual Information (NMI)
scores.
The level of the tree with the highest score is used for evaluation.

We show the full quality scores in the appendix.
Overall, our graph-based algorithms produce results that are comparable with, 
and sometimes superior to the results of the metric-based algorithms in sklearn.
One exception is our complete-linkage algorithm, which is almost always worse
than the sklearn algorithm, which is because the $k$-NN graph
is missing large-distance edges which prevent cluster formation in 
the metric setting.
We note that running our complete-linkage algorithm on the complete graph
(with all distance edges) results in clustering results that match the
quality of the sklearn algorithm.
Our simple-exact and approximate average-linkage algorithm 
(using $\epsilon=0.1$) achieve essentially the same quality results 
as the exact sklearn algorithm (our algorithms achieve on 
average 1.8\% better ARI and 0.5\% better NMI). 
Furthermore, the approximate and simple-exact
algorithms yield identical quality results for all but
the digits dataset, where the simple-exact algorithm is 
less than 1\% better for both quality measures.

\begin{figure}[t]
\begin{center}
\includegraphics[scale=0.475]{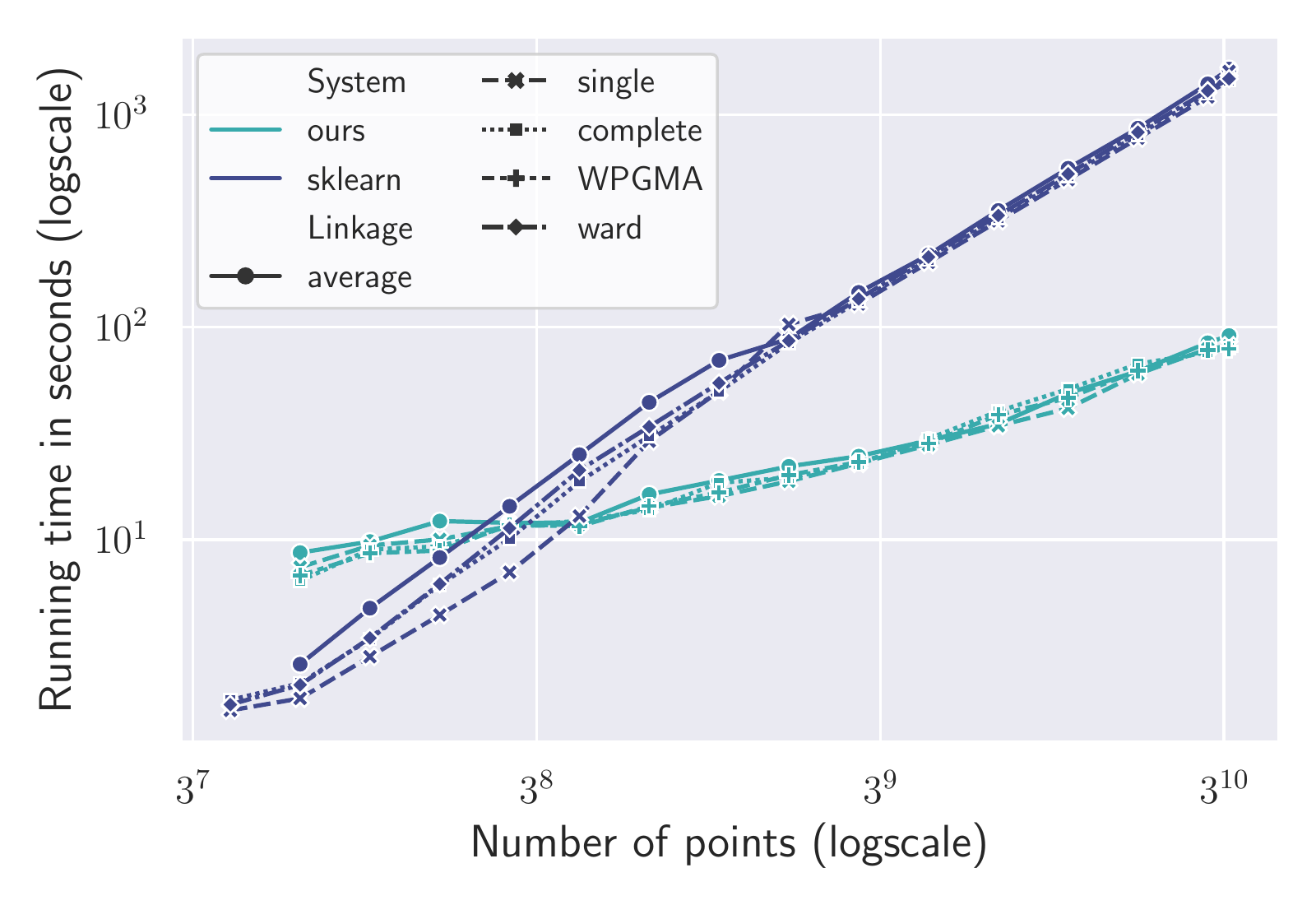}
\caption{End-to-end running times of the sklearn 
and our graph-based algorithms on varying-size slices of
the Fashion dataset.
\label{fig:fashion-comparison}}
\end{center}
\vspace{-0.1in}
\end{figure}

\emph{Scalability.} In the second set of experiments, we study whether our approach can
yield end-to-end speedups over the sklearn algorithms on large pointsets.
We use the Fashion-MNIST (764-dimensions), Last.fm (65 dimensions), and NYTimes (256 dimensions) 
datasets in these experiments.
We run both the sklearn and our algorithms on slices of these
datasets to understand how the running time scales as the number of points to cluster 
increases. 

Our results for the Fashion-MNIST dataset (shown in Figure~\ref{fig:fashion-comparison}) show that after about 10000 points, the end-to-end time of using
the graph-based approach is always faster than using the $O(n^2)$ time metric-based algorithm.
For the full Fashion-MNIST dataset, which contains 60,000 points, our approach yields an overall speedup
of 20.7x. 

\begin{figure}[t]
\begin{center}
\includegraphics[scale=0.475]{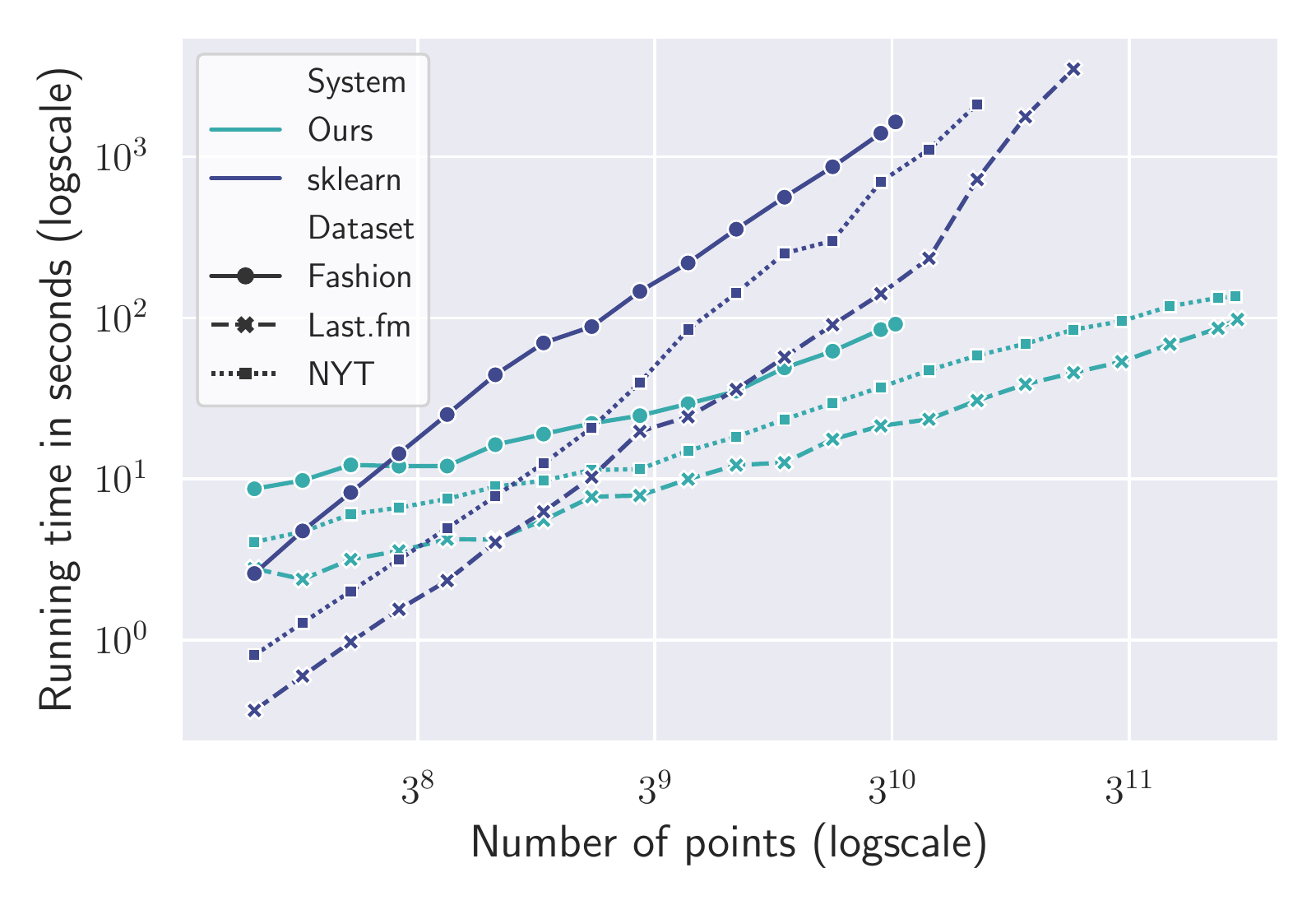}
\caption{End-to-end running times of sklearn's
average-linkage and graph-based approximate average-linkage
on varying-size slices of
the Last.fm and NYTimes datasets.
\label{fig:pointset-comparison}}
\end{center}
\end{figure}

In Figure~\ref{fig:pointset-comparison} we show the results of the same experiment
using the Last.fm and NYTimes datasets, but using only our approximate average-linkage 
algorithm to reduce clutter (this is the slowest algorithm out of all of our 
linkage-measures). We terminated algorithms that ran for longer than
1 hour, and were therefore unable to finish running the metric-based algorithm on the 
full Last.fm (292,385 points) and NYTimes (290,000 points) datasets.
Our algorithms achieve up to 36.2x speedup on the NYTimes dataset and
76.5x speedup on the Last.fm dataset over the available datapoints for sklearn. 
Extrapolating from the trends of the sklearn implementations, a rough estimate suggests 
speedups of between 200x--500x for these datasets.

We note that we also ran the same experiments with the C++-based
HAC implementations provided in SciPy~\cite{scipy} a
nd Fastcluster~\cite{fastcluster}. We obtained running times that were 
within 10\% of the running times of sklearn for all of our datasets and
linkage functions, and thus only report the running times for 
sklearn in Figures~\ref{fig:fashion-comparison} and \ref{fig:pointset-comparison}.

\emph{Limits of our Approach.}
We observed that once we have generated a graph input, our algorithm's
performance scales almost linearly with the number of edges in the graph. 
Currently, the main bottleneck in our experiments for pointsets
is the graph-building step which generates the $k$-NN graph using ScaNN 
and writes the $k$-NN graph to disk.
Supplying the generated $k$-NN graph to our HAC algorithms without first
writing it to disk will further accelerate our algorithms.

Ignoring the cost of the writing to disk, both the memory usage
and running time of the graph-clustering step is lower than that of ScaNN.
Specifically, the memory usage of our algorithms (excluding ScaNN) is 
approximately $56\cdot m + O(n)$ bytes (where the constant on 
the $n$ term is small). Therefore, our implementations can solve 
graphs with up to several billion vertices and 2--3 billion edges 
on a machine with 256GB of memory (obtainable from Google Cloud for a few 
dollars per hour). In terms of running time, we observed that 
our graph-based algorithms scale linearly with the number of 
edges in practice, and we could thus solve such a graph in 
between 12--24 hours.

\section{Conclusion}
In this paper we designed efficient HAC algorithms which run in 
near-linear time with respect to the number of input similarity pairs.
We conducted a preliminary empirical evaluation, which shows 
that our algorithms achieve significant speedups while maintaining 
competitive clustering quality.

For future work, it would be very interesting to understand the parallel complexity of graph-based HAC, and to design efficient exact and approximate algorithms for these problems in a parallel or dynamic setting.
From an experimental perspective, a significant challenge is to design HAC
implementations that can be run on graphs with tens of billions of edges in a
reasonable amount of time.
Combining the ideas in this paper with an efficient dynamic graph
processing system such as Aspen~\cite{dhulipala19aspen} may be a first step 
towards such a result.
Finally, an interesting open question is to design a near-linear time 
exact HAC algorithm for the unweighted average-linkage measure.

\section*{Acknowledgements}
Thanks to D. Ellis Hershkowitz for helpful comments about this paper.
We would also like to thank the anonymous reviewers for their helpful
feedback and suggestions.

\bibliographystyle{abbrv}
\bibliography{references}

\appendix
\section{Algorithms and Definitions}\label{sec:rehash}

For ease of reference, we restate a few keys definitions and algorithms
that are used in the rest of the appendix.

\subsection{Linkage Measures}
Let the \emph{best-neighbor} of a cluster $X$ be $\argmax_{Y \in N(X)} \mathcal{W}(X,Y)$.
We call the edge connecting $X$ and its best-neighbor $Y$ the \emph{best-edge} of $X$.
We say that a linkage measure is \emph{reducible}~\cite{nn-chain}, if
for any three clusters $X,Y,Z$ where $X$ and $Y$ are mutual best-neighbors, it holds that $\mathcal{W}(X \cup Y, Z) \geq \max(\mathcal{W}(X,Z),
\mathcal{W}(Y,Z))$. Our framework yields exact HAC algorithms for any
\emph{reducible} linkage measure that also satisfies the following
property.

\begin{definition} \label{def:trianglelink}
A linkage measure is called \emph{triangle-based} if it satisfies the following property.
Consider any step of the algorithm which merges clusters $B$ and $C$ into a cluster $B \cup C$.
Let $A$ be a cluster distinct from $B$ and $C$.
Then, if edge $(A, C)$ does not exist, $\hacweight{A}{B} = \hacweight{A}{B\cup C}$.
\end{definition}

\subsection{Algorithms}

Most of our algorithms use the Merge routine to merge two clusters,
which is given in Algorithm~\ref{alg:merge}.
We provide the two main HAC implementations in our framework,
which are based on the classic nearest-neighbor chain and
heap-based methods (Algorithm~\ref{alg:nnchain} and
Algorithm~\ref{alg:heapbased} respectively).

\section{Neighbor-Heap Details}\label{sec:datastruct}

\myparagraph{Augmented Heaps}
A simple implementation of \nghheap{}s is to store the edges incident
to a cluster using an augmented binary tree, or an augmented heap.
Given a binary tree storing key-value entries, and a
function $f$ taking an entry and yielding a real-valued priority, we
can obtain a max-heap (min-heap) by setting the initial augmented
value of each vertex to its priority calculated using $f$, and
inductively setting the augmented values for internal vertices using
$\max$ ($\min$) of their augmented value, and the augmented values of
their two children.  We refer to~\cite{pam} for details on
implementing augmented binary trees.

The \bestedgenoarg{} operation can be implemented by having the
augmented value at each tree vertex (corresponding to an edge of the graph) be the
edge weight, and the augmentation function to $\max$. Extracting the
best-edge can then be done using a find-like function which finds a
(key, value) pair in the tree that exhibits the overall augmented
value of the tree. Another way is to set the augmented value to a pair
of the edge weight and the neighbor id, and have the augmentation
function to take a lexicographic maximum. The best-edge in this
approach is simply the augmented value at the root of the tree. In our
implementations, we use the former approach, since it is more
space-efficient than the latter approach. Both implementations cost
$O(\log n)$ work per \bestedgenoarg{} operation.

The \unionnoarg{} operation can be implemented by using \unionnoarg{}
on the underlying augmented binary trees. The algorithm
from~\cite{blelloch16justjoin} (which is implemented in the PAM
library~\cite{pam}) merges two trees of size $n, m$ with $n \leq m$ in
$O(n\log (m/n + 1))$ time, which is asymptotically optimal for
comparison based algorithms~\cite{brown1979fast}.

\section{Deferred Proofs}\label{sec:proofs}

\begin{restatable}{lemma}{trianglelinkages}
Single-linkage, complete-linkage, and WPGMA-linkage are all triangle-based linkages.
\end{restatable}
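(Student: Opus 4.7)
The plan is to verify Definition~\ref{def:trianglelink} directly for each of the three linkage measures by unpacking their definitions. In each case, the condition that edge $(A,C)$ does not exist should be interpreted via the ``undefined'' convention stated in the paper: for single-, complete-, and WPGMA-linkage, $\mathcal{W}(A,C)=\bot$ is the identity element of the operation used to combine weights, so the contribution of the nonexistent edge drops out.

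First I would dispatch single-linkage. For any clusters $X,Y$ we have $\mathcal{W}(X,Y) = \max_{(x,y)\in\text{Cut}(X,Y)} w(x,y)$ (treating this as $\bot$ when $\text{Cut}(X,Y)=\emptyset$). Since $\text{Cut}(A,B\cup C) = \text{Cut}(A,B)\,\cup\,\text{Cut}(A,C)$ and the assumption $(A,C)\notin E$ means $\text{Cut}(A,C)=\emptyset$, we get $\text{Cut}(A,B\cup C) = \text{Cut}(A,B)$, and therefore $\mathcal{W}(A,B\cup C) = \mathcal{W}(A,B)$. The same argument, with $\min$ in place of $\max$ and again using $\bot$ as the identity, handles complete-linkage.

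For WPGMA-linkage I would invoke the recurrence given in the paper: when $B$ and $C$ merge to form $B\cup C$, the similarity of the edge to a neighbor $A$ equals $\tfrac{\mathcal{W}(B,A)+\mathcal{W}(C,A)}{2}$ if both edges exist, and otherwise just the weight of the edge that does exist. Under the hypothesis of the definition the edge $(A,C)$ does not exist, so the second clause of the recurrence applies and yields $\mathcal{W}(A,B\cup C) = \mathcal{W}(A,B)$ immediately.

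There is essentially no obstacle here; the only subtlety is making sure the convention ``edge does not exist'' is interpreted consistently as $\text{Cut}(A,C) = \emptyset$ for single- and complete-linkage (which is equivalent to $\mathcal{W}(A,C) = \bot$) and as the branch of the WPGMA recurrence that drops the missing term. Once these are spelled out, the three claims follow in a line each.
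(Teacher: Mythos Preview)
Your proposal is correct and follows essentially the same approach as the paper: directly verify Definition~\ref{def:trianglelink} for each of the three measures by observing that, when the edge $(A,C)$ is absent, the combination rule (max for single-linkage, min for complete-linkage, and the ``otherwise'' branch of the WPGMA recurrence) leaves $\mathcal{W}(A,B)$ unchanged. Your write-up is in fact more explicit than the paper's, which only spells out the single-linkage case and declares the other two ``similar calculations.''
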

\begin{proof}
It is well known that all of these linkage rules are reducible (e.g., see ~\cite{irbook}).
To show that they also satisfy Definition~\ref{def:trianglelink}, observe that the
weight of an edge $(A,B)$ can change after merging clusters $B$ and $C$
only when $A$ is also connected to $C$. For example, for single-linkage, it is
easy to see that the edges that we take the $\max$ over when
calculating $\hacweight{A}{B \cup C} = \max_{(x,y) \in E(A,B \cup C)} w(x,y)$
can affect the result only when $\hacweight{A}{C} > \hacweight{A}{B}$. Similar calculations
show that complete-linkage and WPGMA-linkage are \trianglelink{}s.
\end{proof}

\subsection{Framework Analysis}

\begin{restatable}{lemma}{lemmergecost} \label{lem:mergecost}
Let $m_1,\ldots, m_{n-1}$ be the
sequence of merge operations performed by an algorithm where $m_i = (u_i, v_i)$.
Let $\mergecost{m_i} = \min(\degree{u_i}, \degree{v_i})$, where \degree{u_i} and \degree{v_i}
are the degrees of the clusters when they are merged. Then, the total cost $\sum_{i=1}^{n-1} \mergecost{m_i} = O(m \log n)$.
\end{restatable}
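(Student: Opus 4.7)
The plan is to reduce the bound to the classical union-by-size doubling argument, even though the algorithm's \textsc{Merge} routine selects the smaller-by-degree side. The key realization is that for the inequality I only need an upper bound on $\min(\degree{u_i}, \degree{v_i})$, and this minimum is trivially at most the degree of \emph{either} endpoint of $m_i$. In particular, it is at most the degree of whichever endpoint has the smaller cluster size. This lets me run a size-based accounting scheme, which doubles nicely, rather than trying to reason directly about contracted-graph degrees.

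Concretely, I would let $A_i$ denote whichever of $u_i, v_i$ has the smaller cluster size at step $i$ (ties broken arbitrarily), so that $\min(\degree{u_i}, \degree{v_i}) \leq \degree{A_i}$. I would then bound $\degree{A_i}$ by a sum of \emph{original-graph} degrees: every contracted-graph neighbor of $A_i$ must be witnessed by at least one original edge with one endpoint in $A_i$, so $\degree{A_i}$ is at most the number of original edges leaving $A_i$, which is in turn at most $\sum_{v \in A_i} d_G(v)$. Swapping the order of summation over merges and vertices gives
\[
\sum_{i=1}^{n-1} \min(\degree{u_i}, \degree{v_i}) \;\leq\; \sum_{i=1}^{n-1} \sum_{v \in A_i} d_G(v) \;=\; \sum_{v \in V} d_G(v) \cdot N(v),
\]
where $N(v) := |\{i : v \in A_i\}|$ counts the merges at which $v$ lay on the smaller-by-size side.

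The final step is the standard doubling bound $N(v) \leq \log_2 n$: each time $v$ sits on the smaller-by-size side of a merge, the size of $v$'s cluster at least doubles after the merge, and cluster sizes never exceed $n$. Plugging this back yields $\sum_i \min(\degree{u_i}, \degree{v_i}) \leq (\log_2 n)\sum_v d_G(v) = 2m\log_2 n = O(m \log n)$, proving the lemma. The only conceptual obstacle is articulating cleanly why we are free to switch from the algorithm's smaller-by-degree convention to smaller-by-size accounting for the analysis; once that first inequality is on the page, everything that follows is straightforward bookkeeping with a standard doubling argument.
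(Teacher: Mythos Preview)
Your proof is correct and rests on the same union-by-size doubling template as the paper, but with a different potential and a cleaner justification of the key step. The paper charges tokens to the current-graph edges of the smaller-\emph{degree} endpoint and tracks a ``total-edges'' quantity $\totaledges{C}=\sum_{v\in C}d_G(v)$, arguing that the merged cluster's $\totaledges{}$ at least doubles whenever its edges receive tokens; since $\sum_C \totaledges{C}=2m$ throughout, each edge collects $O(\log n)$ tokens. You instead pass explicitly from the algorithm's smaller-by-degree criterion to a smaller-by-\emph{size} accounting via the trivial bound $\min(\degree{u_i},\degree{v_i})\le \degree{A_i}$, then upper-bound $\degree{A_i}$ by $\sum_{v\in A_i}d_G(v)$ and double on cluster size. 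That explicit switch is precisely what makes the doubling go through without further argument, and it arguably makes your write-up more transparent than the paper's charging phrasing, which leaves the reader to verify that smaller current degree is compatible with the $\totaledges{}$-doubling claim. One cosmetic point: the paper already uses $N(v)$ for the neighborhood of $v$, so rename your merge-count (e.g.\ $\kappa(v)$) to avoid a symbol clash.
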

\begin{proof}
The proof is by a charging argument. Following the definition of $\mathsf{Cost}$, assign
a token to each edge in $N(\argmin(\degree{u_i}, \degree{v_i}))$ for each $i$. We
do not undercount the cost, since we assign $\mergecost{m_i}$ tokens at each step $i$
to the edges. We now bound the total cost by bounding the number of tokens that can
be assigned to an edge.

We conceptually add an extra \emph{total-edges} variable, \totaledges{u} to the data structures
storing the vertex neighborhoods, $N(u)$. This variable simply stores the total number
of edges that have been merged into this tree. At the start of the algorithm, $\totaledges{u} = d(u)$.
When two clusters $u$ and $v$ merge, if $v$ is the cluster remaining active after the merge,
$\totaledges{v}$ is incremented by $\totaledges{u}$.
Let $A$ be the set of active clusters. It is easy to check that $\sum_{c \in A} \totaledges{c} = 2m$
at all points in the algorithm.

Next, we bound the maximum number of tokens assigned to an edge by observing that each time
an edge has a token assigned to it in some step $i$, the total-edges of the set containing it doubles.
Since \totaledges{u} of a set $u$ can grow to at most $2m$, each edge can receive at
most $O(\log m) = O(\log n)$ tokens, and thus the total cost is $O(m\log n)$.
\end{proof}

\begin{restatable}{theorem}{thmruntime}\label{thm:runtime}
There are deterministic implementations of the chain-based and heap-based algorithms
that run in $O(m\log^2 n)$ time for any \trianglelink{} $\mathcal{L}$.
\end{restatable}
\begin{proof}
We first bound the time-complexity for the merge steps that both algorithms have in
common. Both algorithms perform $n-1$ merge operations, whose total
cost is $O(m\log n)$ using Lemma~\ref{lem:mergecost}.  To translate
this cost measure back to time-complexity, we note that the total time
taken for the $i$-th merge step using the deterministic \nghheap{}
implementation is at most $O(\mergecost{m_i}\log(n / \mergecost{m_i} +
1)) = O(\mergecost{m_i}\log n)$, and thus the total time to perform
all merges is $O(m\log^2 n)$.  We bound the time of the remaining
algorithm-specific steps separately.

\emph{Chain-based Algorithm.} We use a few well-known facts about this
algorithm, namely that (i) each of the $2n-1$ clusters that appears in
the dendrogram is pushed onto the stack exactly once, and (ii) the
number of times \bestedgenoarg{} is called is $O(n)$. Therefore, the
total time-complexity of these steps is $O(n\log n)$ and the overall
time-complexity is $O(m\log^2 n)$.

\emph{Heap-based Algorithm.} The remaining steps
involve extracting edges from the global heap $H$. We analyze two types of edges
that can be extracted: (i) edges whose remaining endpoint is inactive and
(ii) edges whose remaining endpoint is active. There are at most $n-1$
type (ii) edges, since each type (ii) edge results in a merge, and so the
time spent processing these edges is $O(n\log n)$.
Next, for the type (i) edges, observe that each such edge can be charged
to the deactivated endpoint, $v$, and that the total number of charges
for a cluster $v$ is at most its degree at the time it was deactivated.
Thus, the total cost for these edges is
$\sum_{i=1}^{n-1} \mergecost{m_i} = O(m\log n)$ (by Lemma~\ref{lem:mergecost}) and the time-complexity
for these steps is $O(m\log^2 n)$ since each inactive edge takes
$O(\log n)$ time to extract the current best edge from its active endpoint, and to
update $H$. Thus, the overall time-complexity is $O(m\log^2 n)$.
\end{proof}

\subsection{A Faster Randomized Chain-Based Algorithm}

In this section we present a randomized implementation of our
chain-based algorithm which runs in $O(m\log n)$ time.

\myparagraph{Overview}
There are two challenges posed by Algorithm~\ref{alg:merge} that we
must implement more efficiently in order to achieve an $O(m \log n)$
time HAC algorithm.
\begin{enumerate}
  \item The \emph{merge-cost} from Lemma~\ref{lem:mergecost} is
  $O(m\log n)$. Thus, in order to achieve $O(m\log n)$ time we must
  perform each merge operation in (amortized) $O(1)$ time per merged
  element. \label{challenge1}

  \item The overall algorithm also performs $O(m \log n)$
  \emph{neighbor-updates} in Line~\ref{line:mergemapsmaller} of
  Algorithm~\ref{alg:merge}, which remove the id of a merged vertex
  from an active neighbor's \nghheap{} and relabel it to the id of the
  new neighbor. Thus, we must either handle these updates lazily, or
  also handle them in (amortized) $O(1)$ time per operation.
  \label{challenge2}
\end{enumerate}

Our approach to handle (\ref{challenge1}) is to use a faster
randomized implementation of \nghheap{}s which we outline below. The
high-level idea is to use an efficient meldable heap, such as a
Fibonacci heap or Leftist heap in conjunction with hash tables. We deal with
(\ref{challenge2}) by \emph{eagerly} updating the hash tables of our
neighbors when performing a merge, but \emph{lazily} updating the IDs
stored in the meldable heap, except when we identify an edge that is
being merged. The overall cost of the hash table updates is $O(m\log
n)$ time in expectation. Although the updates to the heaps cost
$O(\log n)$ time each, since they require deleting two existing
elements and reinserting a new merged element, each of these updates
can be assigned uniquely to an edge in the original graph, and thus
the overall time complexity for these updates is also $O(m\log n)$.
We now provide a detailed description of our approach.

\subsubsection{Neighbor-Heaps using Meldable Heaps and Hashing}
We give an alternative implementation of \nghheap{}s, which is
asymptotically faster than the augmented-heap based implementation at
the cost of using randomization. The idea is to use a heap data
structure that supports efficient \emph{melding}, such as Fibonacci
heaps~\cite{fredman87}, in combination with a hash table.
The \nghheap{} representation for a cluster $v$ is a pair of a heap
and a hash-table where the neighbors of $v$ are stored in both data
structures.
Let $\meldheap{A}$ denote the heap and $\meldtab{A}$ denote the
table for a cluster $A$.
The elements in \meldheap{A} are pairs of a cluster-id, $C$, and the
associated weight of this edge \hacweight{A}{C}. The priority of an
element is just the weight.
The elements in \meldtab{A} are triples of a key (a cluster-id), $C$,
the associated weight of this edge \hacweight{A}{C}, and a pointer to
the location of the element for $C$ in \meldheap{A}.

The \textsc{Meld} operation on two mergeable heaps $\mathcal{H}_1,
\mathcal{H}_2$ can be done in $O(1)$ time. Note that this operation
\emph{does not} detect elements in $\textsc{Keys}(\mathcal{H}_1) \cap
\textsc{Keys}(\mathcal{H}_2)$, which is why we also store the elements
in both heaps in a hash-table (which implements intersection
efficiently).

We also define a \textsc{T-Merge} operation on two hash-tables in
\nghheap{}s, \meldtab{A}, \meldtab{B}, which works as follows. Without loss of generality let
$|\meldtab{A}| \leq |\meldtab{B}|$.
We map over the elements in \meldtab{A}, and insert them into the
larger size table. If an key $C$ appears in both tables, then we merge
this edge using the linkage function $\mathcal{L}$.  We also append
$C$ and the locations of $C$ in both heaps to an array $O$ that
collects all of the locations for $C \in \meldtab{A} \cap
\meldtab{B}$.
The \textsc{T-Merge} operation returns $O$ and \meldtab{B}.
\textsc{T-Merge} on two tables \meldtab{A}, \meldtab{B} runs in
$O(\min(|\meldtab{A}|, |\meldtab{A}|))$ expected time.

\subsubsection{Merging Clusters}
Next, we present how two clusters are merged using the randomized
\nghheap{} (Algorithm~\ref{alg:randomizedmerge}).
\begin{algorithm}\caption{$\textsc{FastMerge}(A, B, \mathcal{L})$}\label{alg:randomizedmerge}
    \begin{algorithmic}[1]
        \Require{Active clusters $A$ and $B$, \trianglelink{} $\mathcal{L}$.}
        \Ensure{Cluster id of the remaining active cluster.}

        \State $(A,B) = (\argmin(\degree{A}, \degree{B}), \argmax(\degree{A}, \degree{B}))$
        \State Remove $B$ from \meldheap{A} and \meldtab{A} (and vice versa). \label{randmerge:remove}

        \State $(O, \meldtab{B}) = \textsc{T-Merge}(\meldtab{A}, \meldtab{B})$
        \Comment{$O$ holds heap-locations for $C \in \meldtab{A} \cap \meldtab{B}$}  \label{randmerge:mergetables}

        \For {each $(C, L_A, L_B) \in O$}\label{randmerge:forstart}
          \State Delete $C$ from \meldheap{A} and \meldheap{B} using $L_A$ and $L_B$ to find these elements.\label{randmerge:deletelocs}
          \State Insert $C$ into \meldheap{B} with the weight merged using $\mathcal{L}$. Let the pointer to this element be $L'$.\label{randmerge:insertb}
          \State Update the location of $C$ in \meldtab{B} to $L'$.\label{randmerge:updateloc}
        \EndFor

        \State $\meldheap{B} = \textsc{Meld}(\meldheap{A}, \meldheap{B})$. \Comment{Before the meld, $\textsc{Keys}(\meldheap{A}) \cap \textsc{Keys}(\meldheap{B}) = \emptyset$.} \label{merge:meld}

        \For {$C \in \textsc{Keys}(\meldtab{A})$} \label{randmerge:fornghs}
          \State Update cluster-id from $A$ to $B$ in \meldtab{C} and \meldheap{C}. If $B \in \meldtab{C}$, use $\mathcal{L}$
            to merge the edge weights and update both \meldtab{C} and \meldheap{C}. \label{randmerge:updatengh}
        \EndFor

        \State Mark cluster $A$ as inactive.
        \State Return $B$.
    \end{algorithmic}
\end{algorithm}
Algorithm~\ref{alg:randomizedmerge} is similar to our original Merge
algorithm, Algorithm~\ref{alg:merge} with a few key differences.
First, Line~\ref{randmerge:remove} removes the IDs of the merged clusters from
both the heaps and hash-tables for each of the merging clusters.
Next, on Line~\ref{randmerge:mergetables} the algorithm merges the
hash-tables of both clusters using the \textsc{T-Merge} routine
described above. The result is a sequence $O$ of triples containing
the cluster-id, and two heap-locations of
$C \in \textsc{Keys}(\meldtab{A}) \cap \textsc{Keys}(\meldtab{B})$,
and the newly merged table, \meldtab{B}.
The algorithm then loops over these clusters $C$ with edges to both
$A$ and $B$, and the location of these edges in $\meldheap{A}$ and \meldheap{B} (Lines~\ref{randmerge:forstart}--\ref{randmerge:updateloc}).
For each such cluster, the algorithm first deletes $C$ from \meldheap{A}
and \meldheap{B} using the given locations (Line~\ref{randmerge:deletelocs}).
It then inserts $C$ into \meldheap{B} with the updated weight of this edge
(Line~\ref{randmerge:insertb}).
Note that at this point, $\textsc{Keys}(\meldheap{A}) \cap \textsc{Keys}(\meldheap{A}) = \emptyset$.
After the loop, the algorithm first melds the two heaps (Line~\ref{merge:meld}).
The last step is to update the \nghheap{}s of neighbors of $A$ (the deactivated cluster).
The algorithm iterates over all neighbors $C$ of $A$ on Line~\ref{randmerge:fornghs}, and
on Line~\ref{randmerge:updatengh}, updates the id from $A$ to $B$ in \meldtab{C} and \meldheap{C}. Note that the location
of $A$ in \meldheap{C} is stored \meldtab{C}. If $C$ also has an edge to $B$, it sets
the weight of the $(C,B)$ edge to the updated weight in \meldtab{C}. It also deletes
$A$ and $B$ from \meldheap{C} and reinserts $B$ into \meldheap{C} with the correct weight.
Finally, it marks $A$ as inactive and returns the ID of the remaining active
cluster, $B$.

\subsubsection{Modifications to the Chain-Based Algorithm}
Lastly, we discuss how to modify the chain-based algorithm to obtain an
$O(m\log n)$ time HAC for \trianglelink{}. The algorithm is identical
to Algorithm~\ref{alg:nnchain} with the only differences being the representation
of the \nghheap{} data structures, and the merge routine. Specifically,
the call to Merge on Line~\ref{line:nnchainmerge} uses the \textsc{FastMerge}
algorithm (Algorithm~\ref{alg:randomizedmerge}). As Lemma~\ref{lem:randchaingraphstate}
shows, after a merge, the state of the \nghheap{} data structures corresponds
to the state of the current graph induced by the active clusters, and thus
we do not have to modify \bestedgenoarg{}.

\begin{lemma}\label{lem:randchaingraphstate}
After a call to Algorithm~\ref{alg:randomizedmerge}, the adjacency information
stored in the \nghheap{}s (both \meldheap{A} and \meldtab{A}) of all active clusters $A$ is correct.
\end{lemma}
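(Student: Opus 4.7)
The plan is to prove this by an invariant-preservation argument, assuming inductively that just before the call to \textsc{FastMerge}$(A,B,\mathcal{L})$, for every active cluster $C$ the pair $(\meldheap{C}, \meldtab{C})$ correctly represents $C$'s neighborhood in the current contracted graph: the keys in both structures are exactly $N(C)$, the stored weight of an entry with key $D$ equals $\hacweight{C}{D}$, and each entry in \meldtab{C} stores a valid pointer into the corresponding node of \meldheap{C}. I then show that after the call, the same invariant holds for every active cluster in the new graph (where $A$ is deactivated and $B$ represents $A\cup B$).

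First I would handle the surviving cluster $B$. After Line~\ref{randmerge:remove}, neither structure for $A$ nor $B$ references the other, which is correct since the edge $(A,B)$ no longer exists. The \textsc{T-Merge} on Line~\ref{randmerge:mergetables} produces a \meldtab{B} whose keys are exactly $N(A)\cup N(B)\setminus\{A,B\} = N(A\cup B)$ and whose stored weight at a triangle neighbor $C\in N(A)\cap N(B)$ is the $\mathcal{L}$-combination of $\hacweight{A}{C}$ and $\hacweight{B}{C}$, matching the new $\hacweight{A\cup B}{C}$ by the triangle-based property. The loop on Lines~\ref{randmerge:forstart}--\ref{randmerge:updateloc} uses the collected locations to delete the two stale copies of $C$ from \meldheap{A} and \meldheap{B} and reinsert a single entry with the merged weight, updating the pointer in \meldtab{B} to the new heap node. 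After this loop $\textsc{Keys}(\meldheap{A})\cap\textsc{Keys}(\meldheap{B})=\emptyset$, which is the precondition required by the $O(1)$ \textsc{Meld} on Line~\ref{merge:meld}, so the resulting \meldheap{B} has exactly the key-set and weights of \meldtab{B}, and the pointers recorded in \meldtab{B} remain valid because \textsc{Meld} does not relocate existing nodes.

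Next I would cover the neighbors by partitioning $N(A\cup B)$ into (a) $C\in N(A)\cap N(B)$, (b) $C\in N(A)\setminus N(B)$, and (c) $C\in N(B)\setminus N(A)$. The loop at Line~\ref{randmerge:fornghs} visits exactly the clusters in (a) and (b) (the keys of the original \meldtab{A}). For case (b), only the cluster-id of the single entry for $A$ needs to be relabeled to $B$ in both \meldtab{C} and \meldheap{C}; by the triangle-based property the stored weight is still correct for the new edge $(C,B)$. For case (a), the neighbor has separate entries for $A$ and $B$, so we merge their weights via $\mathcal{L}$ (which by triangle-basedness equals $\hacweight{C}{A\cup B}$), delete the stale entry, and keep a single consistent entry in both structures with an updated pointer. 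For case (c) the loop never touches $C$, which is correct precisely because $(A,C)$ does not exist and so by Definition~\ref{def:trianglelink} the weight $\hacweight{C}{B}$ is unchanged; the cluster-id $B$ is also unchanged, so $C$'s invariant is automatically preserved. Finally, $A$ is marked inactive, so no invariant is required of it.

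The main obstacle I anticipate is the bookkeeping between \meldheap{} and \meldtab{}: I need to argue carefully that every stored heap-location pointer remains valid through each of the intermediate deletion, insertion, and meld steps, and that \textsc{T-Merge} produces the pointer array $O$ before those pointers are invalidated. A subtler point is justifying, in case (c), that no work at all is required on $C$'s side; this is exactly where the triangle-based property of $\mathcal{L}$ (Definition~\ref{def:trianglelink}) is indispensable, and I would flag that without this property the lemma would fail and the whole framework of Section~\ref{sec:framework} would need additional updates for every non-triangle neighbor.
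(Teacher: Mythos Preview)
Your proposal is correct and follows essentially the same inductive structure as the paper's proof: first verify that $B$'s merged \nghheap{} is correct, then case-split on the neighbors (your cases (a), (b), (c) correspond to the paper's $N(A)\cap N(B)$, $N(A)\setminus N(B)$, and $N(B)\setminus N(A)$). You are in fact more explicit than the paper in two places: you invoke Definition~\ref{def:trianglelink} directly to justify why cases (b) and (c) require no weight update, whereas the paper simply asserts the weights are unchanged; and you track the validity of the heap-location pointers through \textsc{T-Merge}, the deletion/reinsertion loop, and \textsc{Meld}, which the paper does not discuss at all.
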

\begin{proof}
The proof is by induction. Consider the $k$-th merge between two vertices
$A$ and $B$, and assume that the claim holds before this merge. Assume without
loss of generality that $A$ is deactivated and $B$ remains active.
The only clusters affected by the merge are $\{B\} \cup \{C \in N(A)\}$, since all
neighbors in $N(B) \setminus N(A)$ have an edge to $B$ with the same weight as
before the merge.

First, we show that Algorithm~\ref{alg:randomizedmerge} correctly updates
the edges incident to $B$. The only edges that experience weight change are
those in $N(A) \cap N(B)$, which the algorithm detects when performing \textsc{T-Merge}
on Line~\ref{randmerge:mergetables}. For each neighbor $C$ in $N(A) \cap N(B)$,
it deletes $C$ from both \meldheap{A} and \meldheap{B} (Line~\ref{randmerge:deletelocs})
and reinserts $C$ into \meldheap{B} with the correct weight. Finally, the location
corresponding to $C$ is updated in \meldtab{B}. Note that \textsc{T-Merge} also
sets the weight of $C$ correctly in \meldtab{B}. The remaining affected edges are
new neighbors of $B$, which are correctly labeled and stored in \meldtab{B} and
\meldheap{B}.

Second, we show that Algorithm~\ref{alg:randomizedmerge} correctly updates the
\nghheap{}s for $C \in N(A)$. It processes these neighbors in the for-loop on
Line~\ref{randmerge:fornghs}. If the neighbor $C$ is not in $N(A) \cap N(B)$, it
just updates the cluster-id from $A$ to $B$ in \meldheap{C}, and leaves the location
in \meldtab{C} unchanged. Otherwise, for $C \in N(A) \cap N(B)$, it deletes $A$ and
$B$ from \meldheap{C}, updates the weight of the edge using $\mathcal{L}$ and reinserts
$B$ into \meldheap{C} (similarly for \meldtab{C}). Therefore, all of the neighbors
$C \in N(A)$ reference $B$ after Algorithm~\ref{alg:randomizedmerge} finishes.
\end{proof}

Using Lemma~\ref{lem:randchaingraphstate}
we have that the state of the \nghheap{} data structures correspond to
the current state of the graph induced by the active clusters after
performing a merge operation. Combining the fact that the \nghheap{} data is always
correct after a merge with the existing proof for the correctness of the
chain-based algorithm suffices to show that our randomized implementation
is correct. Next, we show that our approach is also efficient.

\begin{restatable}{theorem}{frameworkchainbased}
There is a randomized implementation of the chain-based algorithm
that runs in $O(m\log n)$ time in expectation for any \trianglelink{} $\mathcal{L}$.
\end{restatable}
\begin{proof}
We follow the proof of Theorem~\ref{thm:runtime} and separately account
for the cost of the merge steps, and the cost of the remaining steps in the
algorithm.

The algorithm performs $n-1$ merge operations, with a total merge-cost
of $O(m\log n)$ using Lemma~\ref{lem:mergecost}. To translate this cost
measure to the time-complexity measure, we examine the two types of operations
done inside of the \textsc{FastMerge} algorithm (Algorithm~\ref{alg:randomizedmerge}).

The first type of updates are those done on the hash-tables, \meldtab{C} for a
cluster $C$. Over all merges, there are $O(m\log n)$ such operations, which each
cost $O(1)$ time in expectation, and thus the overall cost of the hash-table updates
are $O(m\log n)$ in expectation.

The second type of updates are done on the heaps. First, the cost of
melding the two heaps is $O(1)$ amortized using lazy Fibonacci heaps
and $O(\log n)$ using Leftist or eager Fibonacci heaps. In either
case, the overall cost of the meld operations is at most $O(n\log
n)$. The remaining heap operations can be broken up further into two
categories of heap updates which update the cluster-ids and weights of
edges in the heaps.
\begin{enumerate}
\item Updates that only affect the cluster-ids of an edge cost $O(1)$
time each in expectation, since they are done by looking up the
location of the edge in the hash-table, and updating the id of this
element in $O(1)$ time.\label{heap:type1}

\item Updates that change the weights of edges in the heap
are more costly since they require deleting and reinserting elements
from the heap, and deleting an element costs $O(\log n)$ time.
However, updating the weight of an edge is done only when two clusters
$A,B$ merge and both $A,B$ have an edge to a neighbor cluster $C$.
We observe that we can charge the cost of this step to one of the
original edges in the graph, and that each original edge is charged at
most once. \label{heap:type2}
\end{enumerate}
For heap updates of type (\ref{heap:type1}), the cost is thus $O(m\log
n)$ time in expectation. For heap updates of type (\ref{heap:type2}),
there are at most $O(m)$ of these updates, and each costs $O(\log n)$
time for a total cost of $O(m\log n)$ time.

Finally, the remaining steps in the algorithm outside of the merge
steps are $O(n)$ \bestedgenoarg{} queries, which each cost $O(\log n)$
time for a total cost of $O(n\log n)$ time. Thus, the overall
time-complexity of the algorithm is $O(m\log n)$ in expectation, as
desired.

\end{proof}

\subsection{Exact Unweighted Average-Linkage}

First, we present again the two key subroutines that make up our exact
unweighted average-linkage algorithm (Algorithms~\ref{alg:flipedge}
and Algorithm~\ref{alg:updateorientation}).

\begin{algorithm}\caption{$\textsc{FlipEdge}(A, B)$}\label{alg:flipedge}
    \begin{algorithmic}[1]
        \Require{Edge oriented from active clusters $A$ to $B$.}
        \Ensure{Edge oriented from $B$ to $A$.}
        \State $w = \hacweight{A}{B}$. [true weight of the edge]
        \State Update the edge $(A, B, w)$ in \heap{A}.
    \end{algorithmic}
\end{algorithm}

\begin{algorithm}\caption{$\textsc{UpdateOrientation}(A, B, \mathcal{EO})$}\label{alg:updateorientation}
    \begin{algorithmic}[1]
        \Require{Active clusters $A$ and $B$, dynamic orientation structure $\mathcal{EO}$.}
        \State $(A,B) = (\argmin(\degree{A}, \degree{B}), \argmax(\degree{A}, \degree{B})$.
        \State For each $C \in N(A)$, delete $(C, A)$ and insert $(C, B)$ into the orientation data structure. Edge flips are handled using \textsc{FlipEdge} [Algorithm~\ref{alg:flipedge}]
        \For {each edge $(B, C)$ oriented out of $B$}
            \State $w = \hacweight{B}{C}$. [true weight of the edge]
            \State Update the edge $(B, C, w)$ in \heap{C}.
        \EndFor
    \end{algorithmic}
\end{algorithm}

Let $\hat{G}$ denote the directed graph induced by the active
clusters, with edges oriented according to the edge-orientation
$\mathcal{EO}$. We start by proving a lemma that helps prove that our
exact unweighted average-linkage algorithm is correct.

\begin{lemma}\label{exactavg:maintainedgraph}
After a call to Algorithm~\ref{alg:merge}, for each active cluster
$A$, the weights of all edges directed towards $A$ in $\hat{G}$ are
set correctly in \heap{A}.
\end{lemma}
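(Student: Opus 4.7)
The plan is to prove this by induction on the number of merges performed so far. The base case holds trivially since every $\heap{A}$ initially stores the true weight of each incident edge, so the invariant is satisfied for whatever initial orientation $\mathcal{EO}$ chooses.

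For the inductive step, consider the current merge, which is implemented by first calling \textsc{UpdateOrientation}$(A,B,\mathcal{EO})$ and then \textsc{Merge}$(A,B,\mathcal{L})$, where $A$ is the smaller cluster and $B_{\text{new}}=A\cup B$ is the surviving cluster. Let $\hat{G}$ denote the oriented graph before these operations and $\hat{G}'$ the one after. By the inductive hypothesis, for every active cluster $C$, $\heap{C}$ correctly stores the in-edge weights in $\hat{G}$. I would verify the invariant on $\hat{G}'$ by splitting on whether the cluster in question is $B_{\text{new}}$ or not. If $C\ne B_{\text{new}}$, fix an in-edge $X\to C$ of $\hat{G}'$. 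When $X = B_{\text{new}}$, the edge is oriented out of $B$ in the new $\mathcal{EO}$, so the final loop of \textsc{UpdateOrientation} refreshes the entry for $B$ in $\heap{C}$ with the correct weight. When $X\ne B_{\text{new}}$, neither $X$ nor $C$ changed vertex set or size across the merge, so either the in-direction was preserved from $\hat{G}$ (and the inductive hypothesis directly applies) or the edge was flipped by the dynamic orientation structure, in which case \textsc{FlipEdge} was invoked and wrote the correct weight into $\heap{C}$ at the moment of the flip.

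For $C = B_{\text{new}}$, the key observation is the additive structure of the stored-value representation: if $Y$ is a neighbor of $B_{\text{new}}$, the correct post-merge entry in $\heap{B_{\text{new}}}$ for $Y$ is $\tfrac{1}{|Y|}\sum_{(x,y)\in\cut{B_{\text{new}},Y}}w(x,y)$, which equals the sum of the corresponding partial sums over $\cut{A,Y}$ and $\cut{B,Y}$ since $A$ and $B$ are disjoint. Thus the combination performed by $\mathcal{L}$ inside \textsc{Merge}---which for average-linkage is exactly addition on common neighbors, with missing entries treated as zero---produces the correct value, provided that just before \textsc{Merge} both partial sums stored against $Y$ in $\heap{A}$ and $\heap{B}$ are correct. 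To establish this, I would case-analyze each neighbor $Y$ of $A$ or $B$ in $\hat{G}'$: either the underlying edge was an in-edge of the relevant cluster in $\hat{G}$ and is covered by the inductive hypothesis; or it flipped during reorientation and \textsc{FlipEdge} refreshed it; or it appears as an out-edge of $B$ in the new $\mathcal{EO}$, in which case the final loop of \textsc{UpdateOrientation} refreshed the target's entry while a symmetric argument (prior flip or inductive hypothesis) covers the source-side partial sum.

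The main obstacle is precisely this last case. The inductive invariant only guarantees correctness of entries corresponding to \emph{in-edges}, whereas \textsc{Merge} must combine stored values across all common neighbors of $A$ and $B$, potentially including edges that were out-edges of $A$ or $B$ in $\hat{G}$. The hard part of the proof is a careful per-edge bookkeeping showing which mechanism---inductive hypothesis, a \textsc{FlipEdge} triggered by the orientation data structure, or the explicit refresh in the final loop of \textsc{UpdateOrientation}---accounts for each operand handed to $\mathcal{L}$, while verifying that no partial sum is double-counted or omitted when the combination is performed.
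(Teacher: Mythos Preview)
Your proposal follows the same inductive skeleton as the paper's proof: induct on merges, and cover each in-edge of an active cluster by one of three mechanisms --- an invocation of \textsc{FlipEdge} when the orientation changes, the explicit refresh of all out-edges of the surviving cluster $B$ in the final loop of \textsc{UpdateOrientation}, or the inductive hypothesis for edges untouched by the merge. For clusters $C\neq B_{\text{new}}$ your case split matches the paper's argument exactly.

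Where you diverge is in the treatment of $C=B_{\text{new}}$. The paper's proof is considerably terser than yours here: it does \emph{not} separate this case, and in particular it never discusses how the combination performed by $\mathcal{L}$ inside \textsc{Merge} (line~\ref{line:mergemerge}) produces correct entries in $\heap{B}$ for the in-neighbors of $B$. The paper simply places every edge that neither flipped nor is a new out-edge of $B$ under ``remaining edges'' covered by the inductive hypothesis, without addressing that the \emph{correct} stored value of an in-edge $X\to B$ changes when $B$ absorbs $A$. So the obstacle you flag --- that \textsc{Merge}'s operands may come from out-edges of $A$ or $B$, whose stored partial sums the invariant does not certify --- is a real subtlety, but it is one the paper's own proof does not engage with at all rather than one it resolves by some argument you missed. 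In that sense you are being more careful than the paper, not less; the ``hard part'' you anticipate is simply absent from the published argument.
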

\begin{proof}
The proof is by induction. Consider the $k$-th merge between two
clusters $A$ and $B$, and assume that the claim holds before this
merge. Assume without loss of generality that $A$ is deactivated by
this merge, and $B$ remains active. Recall that the merge algorithm
will also invoke Algorithm~\ref{alg:updateorientation}, which updates
the orientation by remapping edges that are incident to the
deactivated cluster from the maintained orientation $\mathcal{EO}$,
and that Algorithm~\ref{alg:flipedge} is invoked each time the dynamic
edge-orientation algorithm flips an edge.

For any edges that are flipped during the execution of
Algorithm~\ref{alg:updateorientation}, the weights of these edges are
correctly set in the heap of the cluster that the edge now points to
(the \emph{head} of this edge). The reason is that
Algorithm~\ref{alg:flipedge} is invoked upon each edge flip and this
algorithm computes the correct weight of the edge and updates the
weight in the heap of the head of this edge. This accounts for all
edges that are flipped by the orientation algorithm when deleting all
$(A,C)$ (undirected) edges and reinserting them as $(B,C)$ edges.

The only remaining edges which may not have updated their
out-neighbors are new edges incident to $B$ that are oriented out of
$B$. Thus, Algorithm~\ref{alg:updateorientation} maps over all edges
oriented out of $B$ and sets the weight of these edges correctly in the
heap of the head of this edge.

We have accounted for (i) all edges whose orientation flips due
to the deletions and insertions in the merge and (ii) the new edges
oriented out of $B$. Finally, by assumption, the remaining edges have
their correct weights set in the heaps of the head of these edges, and
thus all active clusters $A$ have the correct weights for edges that
point to them.
\end{proof}

\begin{restatable}{theorem}{exactavglinkcomplexity}
The exact average-linkage algorithm is correct and runs in $\tilde{O}(n\sqrt{m})$
time for arbitrary graphs.
\end{restatable}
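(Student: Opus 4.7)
The plan is to separately establish correctness and the running-time bound, reusing the merge-cost accounting from Lemma~\ref{lem:mergecost}, the invariant from Lemma~\ref{exactavg:maintainedgraph}, and the guarantees of the dynamic orientation data structure of Henzinger~et~al.

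\textbf{Correctness.} I would prove by induction on the number of merges that, immediately before any \bestedge{A} call, \heap{A} stores the true average-linkage weight of every edge incident to~$A$. Partition the neighbors of~$A$ according to the orientation~$\mathcal{EO}$. Edges directed \emph{into}~$A$ have their correct weights in \heap{A} by Lemma~\ref{exactavg:maintainedgraph}, since these weights are maintained invariantly at the head of each directed edge and are refreshed on every flip via \textsc{FlipEdge} and on every merge via \textsc{UpdateOrientation}. Edges directed \emph{out of}~$A$ are those whose head is some other cluster; their stored values in \heap{A} may be stale because the head may have grown (recall that we store $\frac{1}{|B|}\sum w(a,b)$ and multiply in $1/|A|$ on extraction). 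These are exactly the edges refreshed by the ``before \bestedge{A}'' pass, which iterates over the $O(\alpha)$ out-edges of~$A$, recomputes each true weight, and writes it into \heap{A}. Together with the correctness of the underlying nearest-neighbor chain algorithm (average-linkage is reducible), this yields correctness.

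\textbf{Running time.} I would charge the total cost to three buckets. (i)~\emph{Orientation maintenance.} Each merge of two clusters $u_i,v_i$ issues $\mergecost{m_i}$ deletions and $\mergecost{m_i}$ insertions into $\mathcal{EO}$. By Lemma~\ref{lem:mergecost} the total is $O(m\log n)$, and the Henzinger~et~al.\ structure handles each update in $O(\log^2 n)$ amortized time, so this bucket costs $O(m\log^3 n)=\tilde O(m)$. In particular this amortized bound absorbs all edge flips, each of which triggers an $O(\log n)$ \textsc{FlipEdge} update to a \nghheap{}. (ii)~\emph{Heap merges.} These are the exact same operations analyzed in Theorem~\ref{thm:runtime} and cost $O(m\log^2 n)=\tilde O(m)$ in total via Lemma~\ref{lem:mergecost}. (iii)~\emph{The $\alpha$-dependent refresh steps.} The chain-based driver performs at most $2n$ \bestedge{} queries and $n-1$ merges. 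In each such event, \textsc{UpdateOrientation} (resp.\ the pre-\bestedge{} pass) touches only edges directed out of the relevant cluster, of which there are $O(\alpha(G_i))$ at the $i$-th event, each handled by an $O(\log n)$ heap update. Hence this bucket costs
\[
O\!\left(\log n\sum_{i=1}^{O(n)} \alpha(G_i)\right) = O(n\,\bar\alpha \log n),
\]
where $\bar\alpha=\max_i \alpha(G_i)\le \sqrt{2m}$ by the standard $\alpha=O(\sqrt m)$ bound. Summing the three buckets yields $\tilde O(m + n\sqrt m)=\tilde O(n\sqrt m)$, as claimed. Corollary~\ref{cor:minorclosed} then follows by observing that arboricity of a minor-closed family is preserved under contractions, so $\bar\alpha=O(t)$ there.

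\textbf{Main obstacle.} The delicate point is justifying that the cost incurred by edge \emph{flips} inside the orientation data structure does not exceed the above budget. A naive count could conflate flips with updates and inflate the bound; I would rely on the fact that the amortized update time in Henzinger~et~al.\ already upper-bounds the total number of flips per update by $O(\log^2 n)$, so the total number of \textsc{FlipEdge} calls is $O(m\log^3 n)$ and each costs $O(\log n)$, still $\tilde O(m)$. A second subtlety is the interaction between the lazy weight representation (storing $\frac{1}{|B|}\sum w$ on the $A$-side) and flips: I would verify that the ``true weight'' recomputation inside \textsc{FlipEdge} uses the current sizes and so writes a value consistent with what \heap{B} later expects to read, closing the induction for Lemma~\ref{exactavg:maintainedgraph}.
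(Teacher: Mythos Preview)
Your proposal is correct and follows essentially the same approach as the paper's proof: the correctness argument via Lemma~\ref{exactavg:maintainedgraph} plus the pre-\bestedgenoarg{} refresh of out-edges is identical, and your three-bucket time analysis matches the paper's accounting (orientation updates bounded by $O(m\log n)$ via Lemma~\ref{lem:mergecost} times $O(\log^2 n)$ amortized, with the flip-induced heap updates costing an extra $O(\log n)$ factor; the $\alpha$-dependent work over $O(n)$ merges and \bestedgenoarg{} calls bounded by $O(n\sqrt{m}\log n)$). The only cosmetic difference is that you separate out the framework's heap-merge cost as its own bucket~(ii), whereas the paper leaves it implicit.
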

\begin{proof}
To show correctness, by Lemma~\ref{exactavg:maintainedgraph}, we have
that after the $i$-th merge, the state of each active cluster's heap
is correct for all but the $O(\alpha_i)$ edges that are oriented out
of this cluster. Since before performing a \bestedgenoarg{}
computation, the algorithm maps over all $O(\alpha_i)$ of these edges
and updates the weight of these edges in its heap to the correct
weight, all of the edges in its heap have the correct weight, and thus
the cluster selects the best-edge incident to it. The correctness of
the overall algorithm can now by obtained by combining this proof with
the existing proof for the correctness of the nearest-neighbor chain
algorithm.

Next, we analyze the time-complexity of our algorithm.
We first bound the extra cost incurred by maintaining the \dynorient{}
data structure $\mathcal{EO}$ over the course of the algorithm.
Using the \dynorient{} data structure of Henzinger et
al.~\cite{henzinger20dynamic}, we obtain an amortized cost of
$O(\log^2 n)$ for each edge insertion and deletion.
The \dynorient{} data structure is only updated and used during the
\merge{} and \bestedge{} operations. Consider the $i$-th such
operation, and let $G_i$ be the graph induced by the current
clustering at the time of this operation.

Using Lemma~\ref{lem:mergecost}, the total number of merge operations
is at most $O(m\log n)$. For each of these operations, we have to
perform an edge insertion and deletion, which could translate to a
total $O(m\log^3 n)$ edge flips. Each edge flip also requires $O(\log
n)$ time to update the weight of the edge in the heap of the new head
of this edge. Thus the total cost of the edge insertions, deletions,
and flips is $O(m\log^4 n)$ over the course of the entire algorithm.

The merge algorithm also processes the edges incident to the remaining
active cluster, $B$. The cost for this step is $O(\alpha_i \log n)$,
since there are $O(\alpha_i)$ edges oriented out of $B$ and we pay
$O(\log n)$ to perform a heap-update for each one. Since there are
$n-1$ merges, the overall cost for this step is $O(\log
n\sum_{i=1}^{n-1} \alpha_i)$ over the course of the entire algorithm.

Lastly, the cost for performing the \bestedgenoarg{} operation is
$O(\alpha^{*}_i)$ where $\alpha^{*}_i$ is the current arboricity at
the time of the $i$-th \bestedgenoarg{} operation. Note that there may
be many \bestedgenoarg{} operations performed before a merge is
performed. Since there are $2n-2$ \bestedgenoarg{} operations, the
overall cost is $O(\sum_{i=1}^{2n-2} \alpha^{*}_i)$.

By bounding each $\alpha_i, 1 \leq i \leq n-1$ and $\alpha^{*}_j, 1
\leq j \leq 2n-2$ above as $\alpha_{\max} \leq \sqrt{m}$, the maximum
arboricity of the graph over the entire sequence of merges, the
overall time-complexity of the algorithm is
\[
  O(m\log^{4} n + n\log n \cdot \alpha_{\max}) = \tilde{O}(n \sqrt{m}).
\]

\end{proof}

\subsection{Approximate Unweighted Average-Linkage}

We start by recalling the notion of approximation and the invariant
used in our approximation algorithm.
An \emph{$\epsilon$-close HAC algorithm} is an algorithm which only
merges edges that have similarity at least
$(1-\epsilon) \cdot \mathcal{W}_{\max}$ where $\mathcal{W}_{\max}$
is the largest weight currently in the graph~\cite{48657}.

The idea of our algorithm is to maintain an extra counter for each
cluster which stores the size the cluster had at the last time
that the algorithm updated \emph{all} of the incident edges of
the cluster. Call this variable the \emph{staleness}, $\staleness{A}$, of a
given cluster $A$.
Recall that the size of a cluster $|A|$ is defined to be the number 
of initial  (singleton) clusters that it contains.
Our algorithm maintains the following invariant:
\begin{invariant}\label{inv:staleness}
For any active cluster $A$, $|A| < (1+\epsilon)\staleness{A}$.
\end{invariant}
Let the stored similarity of an edge $(u,v)$ in the neighborhood be denoted \storedwgh{u,v} and the true similarity of this edge be \truewgh{u,v}. The next lemma bounds the maximum error an algorithm maintaining Invariant~\ref{inv:staleness} can observe for an edge incident to an active cluster.
\begin{restatable}{lemma}{weightsapproxclose}\label{lem:weightsapproxclose}
Let $e=(U, V)$ be an edge in the neighborhood of an active cluster $U$. Then, $(1+\epsilon)^{-2}\storedwgh{U,V} \leq \truewgh{U,V}.$
\end{restatable}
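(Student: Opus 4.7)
The plan is to make $\storedwgh{U,V}$ and $\truewgh{U,V}$ explicit in terms of the common edge sum $S := \sum_{(u,v)\in \cut{U,V}} w(u,v)$ and then apply Invariant~\ref{inv:staleness} at both endpoints.

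First, I would verify the closed forms
\[\storedwgh{U,V} \;=\; \frac{S}{\staleness{U}\cdot\staleness{V}}\qquad\text{and}\qquad \truewgh{U,V} \;=\; \frac{S}{|U|\cdot|V|}.\]
The second identity is just the definition of average-linkage. For the first, recall from Section~\ref{sec:averagelink} that in the exact algorithm $\heap{U}$ stores, for each neighbor $V$, the value $S/|V|$ with an implicit multiplier of $1/|U|$ applied on extraction; in the approximate variant this is modified to use the staleness values, storing $S/\staleness{V}$ with an implicit multiplier of $1/\staleness{U}$, so multiplying yields the claimed formula.

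Second, from Invariant~\ref{inv:staleness} applied to both $U$ and $V$, we have $|U| < (1+\epsilon)\staleness{U}$ and $|V| < (1+\epsilon)\staleness{V}$, hence $|U|\cdot|V| < (1+\epsilon)^{2}\staleness{U}\cdot\staleness{V}$. Since $S \geq 0$, dividing through gives
\[\truewgh{U,V} \;=\; \frac{S}{|U|\cdot|V|} \;>\; (1+\epsilon)^{-2}\cdot \frac{S}{\staleness{U}\cdot\staleness{V}} \;=\; (1+\epsilon)^{-2}\storedwgh{U,V},\]
which is precisely the stated bound.

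The main obstacle, and the only step that requires more than a line of algebra, is verifying that the closed form for $\storedwgh{U,V}$ really holds---i.e., that the only source of approximation error is the use of staleness values in the denominators, not a stale numerator. This reduces to two bookkeeping invariants about the algorithm which I would check by inspection: (a) whenever $U$ or $V$ absorbs another cluster, the framework's merge routine rewrites the numerator stored in every affected neighbor-heap so that $S$ is kept current (this update is unavoidable for average-linkage because the measure is not triangle-based, so cross-cut sums genuinely change on merges involving an endpoint); and (b) $\staleness{U}$ changes only during a rebuild of $U$, and every rebuild re-synchronizes all edges incident to $U$ in both endpoint heaps with the new staleness value, and similarly for $V$. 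Once (a) and (b) are in place, the explicit formula for $\storedwgh{U,V}$ is immediate and the lemma follows from the one-line calculation above.
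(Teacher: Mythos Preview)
Your reduction to the invariant and the final inequality is exactly what the paper does, but the claimed \emph{equality}
\[
\storedwgh{U,V} \;=\; \frac{S}{\staleness{U}\cdot\staleness{V}}
\]
is not correct for the algorithm as written, and your justification for it rests on a misreading. The approximate algorithm does \emph{not} adopt the implicit-multiplier representation from the exact algorithm with $|U|,|V|$ swapped for staleness values; the paper states that a rebuild ``updates the similarities of all edges incident to $A$ to their \emph{true} weights,'' and likewise a Type~1 merge (one that hits both endpoints) resets the stored weight to the current true weight. So right after a rebuild of $U$ you have $\storedwgh{U,V}=S/(|U|\cdot |V|)=S/(\staleness{U}\cdot |V|)$, and if $|V|>\staleness{V}$ this is strictly smaller than your formula. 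Your invariant~(b), ``every rebuild re-synchronizes all edges \dots\ with the new staleness value,'' is therefore not what the algorithm does.

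What actually holds, and what the paper proves, is the one-sided bound $\storedwgh{U,V}\le S/(\staleness{U}\cdot\staleness{V})$. The paper gets this via a Type~1/Type~2 case split: since the last time the stored weight was written (either a Type~1 merge or a rebuild), only Type~2 merges have touched $U$ or $V$, so (i) the numerator $S$ is unchanged, and (ii) that last write occurred no earlier than the most recent rebuilds of $U$ and of $V$, hence the sizes used in the denominator at that write were at least $\staleness{U}$ and $\staleness{V}$. Plugging this inequality (rather than your equality) into your last display,
\[
\truewgh{U,V}=\frac{S}{|U|\cdot|V|}>\frac{(1+\epsilon)^{-2}\,S}{\staleness{U}\cdot\staleness{V}}\ge (1+\epsilon)^{-2}\storedwgh{U,V},
\]
completes the argument. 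So the skeleton of your proof is right; the gap is that your bookkeeping invariants (a) and (b) do not yield the closed form you assert, and you need the Type~1/Type~2 observation to get the inequality that replaces it.
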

\begin{proof}
There are two ways that the similarity of the $(U,V)$ edge can change as
the algorithm merges clusters:

\begin{enumerate}
\item By a merge which adds parallel edges to $\cut{U, V}$ (e.g.,
if $U$ merges with a cluster $Z$ which is also connected to $V$,
thereby increasing the total similarity of edges crossing the
cut). \label{lab:typeone}
\item By a merge to $|U|$ or $|V|$ that does not affect the total
similarity of edges going across $\cut{U, V}$ (e.g., if $U$ merges with
a cluster $Z$ that is not connected to $V$). \label{lab:typetwo}
\end{enumerate}

If a Type~\ref{lab:typeone} update occurs, then the similarity of this
edge will be set to the true value upon this update, since the
algorithm will update the similarities of every edge in the intersection of
the merge. Therefore, we can ignore these updates when trying to bound
the maximum drift between the true and stored similarities.
On the other hand, we could have many Type~\ref{lab:typetwo} updates
occur. In this case, the similarity of this edge will not be updated
unless Invariant~\ref{inv:staleness} becomes violated for either $U$
or $V$. Therefore, we have the following upper bounds on the maximum
size of $U$ and $V$, namely that $|U| < (1+\epsilon)S(U)$ and $|V| <
(1+\epsilon)S(V)$.

In the worst case, the stored similarity could have used $S(U)$ and $S(V)$ to
normalize (since the similarity must have been updated when these stored
similarities were set), and the true similarity could use $(1+\epsilon)S(U)$
and $(1+\epsilon)S(V)$. Since the sum term in the similarity equation
doesn't change (since there are no Type~\ref{lab:typeone} updates), we
have that
\[
\storedwgh{U,V} \leq  \frac{1}{S(U) \cdot S(V)} \cdot  \sum_{(u,v) \in \cut{U,V}} w(u,v)
\]
and therefore
\[
\truewgh{U,V} \geq \frac{1}{(1+\epsilon)^2\cdot S(U)\cdot S(V)} \cdot \sum_{(u,v) \in \cut{U,V}}
\]
Therefore, the true similarity is at most a $(1+\epsilon)^{-2}$ factor
smaller than the stored similarity.
\end{proof}

\begin{restatable}{theorem}{appxalgorithmbound}
There is an $\epsilon$-close HAC algorithm for the average-linkage
measure that runs in $O(m\log^2 n)$ time.
\end{restatable}
\begin{proof}
First we show that our algorithm is $\epsilon$-close.
Let $\delta$ be the closeness parameter used internally in the
algorithm, which we will set shortly. For each $(U,V)$ edge, since the
simialrities in the algorithm only decrease, we have that $\truewgh{U,V}
\leq \storedwgh{U,V}$.  Combining this fact with
Lemma~\ref{lem:weightsapproxclose}, we have
that
\[
  (1+\delta)^{-2}\storedwgh{U,V} \leq \truewgh{U,V} \leq \storedwgh{U,V}
\]
for all active edges $U,V$.

Next, consider a merge step in the algorithm which merges two clusters
$A,B$. We have that $\storedwgh{A,B}$ is the largest stored similarity
among any active cluster in the graph. We also have that
$(1+\delta)^{-2}\storedwgh{A,B} \leq W_{\max}$ where $W_{\max}$ is the
current maximum similarity in the graph, since otherwise the stored
similarity corresponding to $W_{\max}$ would be larger than
\storedwgh{A,B} and thus $(A,B)$ would not be the edge selected from
the global heap, $H$. Therefore our approach yields a
$(1 - (1+\delta)^{-2})$-close algorithm, and by setting 
$\delta = \sqrt{1/(1-\epsilon)} - 1$ we obtain an $\epsilon$-close algorithm.

Lastly, we show that the algorithm runs in $O(m\log^2 n)$ time for any
given constant $\epsilon$. Other than the extra work done to update
stale clusters, the algorithm is identical to the heap-based algorithm
from our framework. To bound the work done for stale clusters, observe
that each cluster can become stale at most $O(\log_{1+\delta}n ) = O(\log_{1+\epsilon} n)$
times, and performs $O(\log n)$ work per incident edge each time it
becomes stale. Since each of the original $m$ edges is associated with
at most two active clusters at any point in the algorithm, the overall
time-complexity of updating stale vertices is $O(m \log_{1+\epsilon} n
\log n) = O(m\log^2 n)$ for any constant $\epsilon$. Combining this
with the time-complexity of the heap-based algorithm completes the
proof.
\end{proof}

\section{Experimental Evaluation}\label{sec:exps}
\myparagraph{Graph Data}
We list information about graphs used in our experiments in
Table~\ref{table:sizes}. 
\emph{com-DBLP} {\bf (DB)} is a co-authorship network sourced from the
DBLP computer science bibliography\footnote{Source: \url{https://snap.stanford.edu/data/com-DBLP.html}.}. \emph{YouTube} {\bf (YT)}
is a social-network formed by user-defined groups on the YouTube
site\footnote{Source: \url{https://snap.stanford.edu/data/com-Youtube.html}.}. 
\emph{Skitter} {\bf(SK)} is an internet topology graph generated from traceroutes\footnote{Source: \url{https://snap.stanford.edu/data/as-Skitter.html}.}.
\emph{LiveJournal} {\bf(LJ)} is a directed graph of the
social network\footnote{Source: \url{https://snap.stanford.edu/data/soc-LiveJournal1.html}.}.
\emph{com-Orkut} {\bf (OK)} is an undirected
graph of the Orkut social network\footnote{Source: \url{https://snap.stanford.edu/data/com-Orkut.html}.}.
These graphs are sourced from the SNAP dataset~\cite{leskovec2014snap}.

Another family of graphs that we consider are generated from point
datasets by using an approximate $k$-NN graph construction. All of the point datasets that we use can be found in the sklearn.datasets package\footnote{For more detailed information see \url{https://scikit-learn.org/stable/datasets.html}.}.
We note that the large real-world graphs that we study are not weighted, and 
so we set the similarity of an edge $(u,v)$
to $\frac{1}{\log(\degree{u} + \degree{v})}$.
We symmetrized all directed graph inputs considered in this paper.

\begin{table}\footnotesize
\centering
\centering
\caption{\small Graph inputs, including vertices and
  edges.}
\smallskip{}
\begin{tabular}[!t]{lrr}
\toprule
{Graph Dataset} & Num. Vertices & Num. Edges \\
\midrule
{\emph{com-DBLP} {\bf (DB)} }           & 425,957          &2,099,732\\
{\emph{YouTube-Sym} {\bf (YT)} }        & 1,138,499        &5,980,886 \\
{\emph{Skitter-Sym} {\bf (SK)} }        & 1,696,415        &22,190,596 \\
{\emph{LiveJournal-Sym} {\bf(LJ)} }    & 4,847,571        &85,702,474 \\
{\emph{com-Orkut      } {\bf(OK)} }    & 3,072,627        &234,370,166
\end{tabular}
\label{table:sizes}
\end{table}

\newcommand{\STAB}[1]{\begin{tabular}{@{}c@{}}#1\end{tabular}}
\begin{table*}\footnotesize
\centering
\caption{\small Adjusted Rand-Index (ARI) and Normalized Mutual Information (NMI) scores
of our graph-based HAC implementations (columns 2--5) versus the HAC implementations from sklearn (columns 6--9).
The scores are calculated by evaluating the clustering generated by each cut of the generated 
dendrogram to the ground-truth labels for each dataset. Our graph-based implementations run
over an approximate $k$-NN graph with $k = 50$.}
\smallskip{}
\begin{tabular}{@{}cl ccccc|cccc}
\toprule
& {Dataset} & Single & Complete & WPGMA & Apx-Avg & Avg & Sk-Single & Sk-Complete & Sk-Avg & Sk-Ward \\
\midrule
\multirow{5}{*}{\STAB{\rotatebox[origin=c]{90}{{ARI}}}}
& {\emph{iris}}   & 0.702 & 0.462 & 0.605 & {\bf 0.759} & {\bf 0.759}    & 0.714 & 0.642 & {\bf 0.759} & 0.731 \\
& {\emph{wine}}   & 0.297 & 0.286 & 0.317 & 0.331 & 0.331    & 0.297 & {\bf 0.370} & 0.351 & 0.368 \\
& {\emph{digits}} & 0.661 & 0.133 & 0.500 &  0.876 & {\bf 0.880}    & 0.661 & 0.478 & 0.689 & 0.812 \\
& {\emph{cancer}} & {\bf 0.561} & 0.543 & 0.539 & 0.489 & 0.489    & {\bf 0.561 }& 0.464 & 0.537 & 0.406 \\
& {\emph{faces}}  & 0.467 & 0.438 & 0.480 & 0.508 & 0.508    & 0.467 & 0.471 & 0.529 & {\bf 0.608} \\
\midrule
\multirow{5}{*}{\STAB{\rotatebox[origin=c]{90}{{NMI}}}}
& {\emph{iris}}   & 0.733 & 0.641 & 0.733 & {\bf 0.805} & {\bf 0.805}   & 0.761 & 0.722 & {\bf 0.805} & 0.770 \\
& {\emph{wine}}   & 0.410 & 0.388 & 0.387 & 0.427 & 0.427   & 0.417 & {\bf 0.463} & 0.448 & 0.448 \\
& {\emph{digits}} & 0.772 & 0.572 & 0.713 & 0.900 & {\bf 0.902}    & 0.778 & 0.729 & 0.838 & 0.868 \\
& {\emph{cancer}} & 0.316 & 0.359 & 0.384 & {\bf 0.460} & {\bf 0.460}    & 0.385 & 0.442 & 0.456 & 0.446 \\
& {\emph{faces}}  & 0.847 & 0.846 & 0.857 & 0.859 & 0.859    & 0.856 & 0.855 & 0.867 & {\bf 0.871}
\end{tabular}
\label{table:sklearnaccuracy}
\end{table*}


\end{document}